\numberwithin{equation}{section} \numberwithin{figure}{section}
\numberwithin{table}{section}
\newtheorem{theorem}{Theorem}[section]
\newtheorem{proposition}{Proposition}[section]
\newtheorem{corollary}{Corollary}[section]
\newtheorem{assumption}{Assumption}[section]
\newtheorem{lemma}{Lemma}[section]
\newtheorem{example}{Example}[section]
\begin{document}

\vspace*{1mm}

\begin{center}
{\Large Log-supermodularity of weight functions and the loading
monotonicity of weighted insurance premiums}

\bigskip

{\large Hristo S.~Sendov$^{a}$,
Ying Wang$^{b}$,
and Ri\v cardas Zitikis$^{a,*}$}

\medskip

$^{a}$\textit{Department of Statistical and Actuarial Sciences,
University of Western Ontario, London, Ontario N6A 5B7, Canada}

$^{b}$\textit{Department of Applied Mathematics,
University of Western Ontario, London, \break Ontario N6A 5B7, Canada}

\medskip
(E-mails:
\texttt{hssendov@stats.uwo.ca},~~~\texttt{ywang767@uwo.ca},~~~\texttt{zitikis@stats.uwo.ca})

\end{center}

\medskip

\noindent
\rule{\linewidth}{0.2mm}

\noindent
\textbf{Abstract}
\medskip

The paper is motivated by a problem concerning
the monotonicity of insurance premiums with respect to
their loading parameter: the larger the parameter, the larger the insurance
premium is expected to be. This property, usually called
loading monotonicity, is satisfied by premiums that appear in
the literature. The increased interest in constructing new insurance premiums
has raised a question as to what weight functions would produce
loading-monotonic premiums. In this paper we demonstrate
a decisive role of log-supermodularity
in answering this question. As a consequence, we establish -- at a stroke --
the loading monotonicity of a number of well-known insurance premiums
and offer a host of further weight functions,
and consequently of premiums,
thus illustrating the power of the herein suggested methodology
for constructing loading-monotonic insurance premiums.

\bigskip

\noindent
\textit{JEL Classification:}
\begin{quote}
C02 - Mathematical Methods\\
C44 - Statistical Decision Theory; Operations Research \\
C51 - Model Construction and Estimation\\
D81 - Criteria for Decision-Making under Risk and Uncertainty
\end{quote}

\medskip

\noindent
\textit{Keywords and phrases}:
Insurance premium;
Weighted premium;
Weighted distribution;
Log-supermodularity;
Supermodularity;
Submodularity;
Monotonicity,
Esscher premium;
Conditional tail expectation;
Kamps premium;
Wang premium;
Distortion premium;
Chebyshev's inequality;
Decision under uncertainty.

\noindent
\rule{\linewidth}{0.2mm}

\vfill

\noindent
$^{*}${\small Corresponding author:
tel.: +1 519 432 7370;
fax.: +1 519 661 3813;
e-mail: \texttt{zitikis@stats.uwo.ca}}

\newpage

\section{Introduction}
\label{section-1}

Let $(\Omega , \mathcal{A}, \mathbf{P})$ be a probability triplet. The integral
$\mathbf{E}[X]=\int_{\Omega}X(\omega )\mathbf{P}(d\omega )$
is known in the statistical literature as the expectation (or mean)
of the `random variable' $X:\Omega \to \mathbf{R}$, and also
known  in the actuarial literature as the net premium of the `loss variable' $X$.
In the latter case, $X$ usually takes on non-negative values.
In order to avoid mathematical trivialities of constant worry,
throughout this paper except when explicitly noted otherwise,
we work with positive random variables (i.e., $X>0$ almost surely)
and denote their class by $\mathcal{X}_{+} $.

The net premium $\mathbf{E}[X]$ is the very minimum that
the insurer needs to charge customers for agreeing to accept their
risks and have a chance to remain solvent. In fact, in order to meet
various financial obligations, the insurer charges
a larger amount; denote it by $H[X]$. This defines a functional
$H:\mathcal{X}_{+}\to [0,\infty ] $, called premium calculation principle
or, simply, premium. Any premium $H$ such that
$H[X]\ge \mathbf{E}[X] $ for all $ X\in \mathcal{X}_{+}$
is called non-negatively loaded or, simply, loaded.
Loaded premiums are constructed using various actuarial considerations
and mathematical techniques, and we shall next briefly discuss two of them.

First and arguably one of the oldest
general techniques for constructing loaded premiums is
based on the fact that the net premium $\mathbf{E}[X]$ can be
written -- using the Fubini theorem or the integration by parts formula --
as the integral $\int_0^{\infty}\mathbf{P}[X>x]dx$. (Recall that
$X\in \mathcal{X}_{+}$; for real-valued $X$, we would have
a Choquet integral; see Denneberg (1994).)  Choose
any function $g:[0,1] \to [0,\infty )$ such that $g(s)\ge s$ for
all $0\le s \le 1$ and modify the above integral into
$\int_0^{\infty}g(\mathbf{P}[X>x])dx$, which is a loaded premium,
known  in the actuarial literature as the distortion, or Wang's, premium
(e.g., Denuit \textit{et al.}, 2005).
The function $g:[0,1] \to [0,\infty )$ is called
the distortion function and usually depends on a parameter,
called the distortion parameter, which governs the amount of loading
contained in the distortion premium.

The second and fairly recent avenue for constructing loaded premiums
has been suggested by Furman and Zitikis (2008).
Just like in the case of the distortion premium, the construction
starts with the net premium $\mathbf{E}[X]$ but this time transforming
the integrator $\mathbf{P}(d\omega )$
with a `weight' function $w:[0,\infty ) \to [0,\infty )$.
This approach gives a new probability measure
$\mathbf{P}_w(d\omega )=w(X(\omega ))
 \mathbf{P}(d\omega )/ \mathbf{E}[w(X)]$,
assuming of course that the `normalizing' expectation
$\mathbf{E}[w(X)]$ is non-zero, that is, positive.
The weighted premium is (Furman and Zitikis, 2008)
\[
H_w[X]=\int_{\Omega}X(\omega )\mathbf{P}_w(d\omega ).
\]

In many special cases of $H_w[X]$ that we find in the literature
(see, e.g., Furman and Zitikis, 2008, 2009), the weight function $w$ is
indexed with a parameter, which we denote by $\lambda $. The parameter controls
the amount of loading and is therefore called the loading parameter.
After a re-parametrization if necessary, the loading parameter $\lambda $
can be assumed to be in $(0,\infty )$, with the limiting value
of $H_w[X]$ when $\lambda \downarrow 0$
associated with the net premium $\mathbf{E}[X]$.

From now on, therefore, instead of $w(x)$ we work with weight functions
$x\mapsto w(\lambda, x)$ indexed by $\lambda \in (0,\infty ) $.
It therefore becomes natural to start using the notation
$H[\lambda,X]$ instead of the more cumbersome $H_{w(\lambda, \cdot)}[X]$.
That is, let
\[
H[\lambda,X]
={ \mathbf{E}[Xw(\lambda, X)] \over \mathbf{E}[w(\lambda, X)] }
\]
and denote
$\Lambda[w,X]=\{\lambda \in (0,\infty ) :\, \mathbf{E}[Xw(\lambda, X)]<\infty
\textrm{ and } \mathbf{E}[w(\lambda, X)]>0 \}$, which is usually
an interval, finite or infinite, depending on the cumulative distribution
function (cdf) $F$ of the loss variable $X$.
From the intuitive point of view, we expect that
the larger the parameter $\lambda $ is, the larger the value of $H[\lambda,X]$ is.
This monotonicity property may not always hold as it depends on the weight function
$x\mapsto w(\lambda, x)$. Later in this paper we shall specify
conditions under which the loading monotonicity of $H[\lambda,X]$ holds.

A special case of the loading monotonicity property, and
one of the basic requirements for $H[\lambda,X]$ to satisfy,
is the aforementioned non-negative loading property:
\[
H[\lambda,X]\ge \mathbf{E}[X] \quad \textrm{for all} \quad
(\lambda, X)\in (0,\infty )\times \mathcal{X}_{+}.
\]
The property is satisfied whenever the weight function
$x\mapsto w(\lambda, x)$ is non-decreasing, which can be infered
from a classical result of Lehmann (1966) stating that the bound
$\mathbf{E}[ u(X) v(X)] \ge \mathbf{E}[ u(X)] \mathbf{E}[v(X)]$
holds for all non-decreasing (Borel) functions $u$ and $v$
for which the expectations are well-defined and finite.
In fact, this property is well-known in Mathematical Analysis
under the name of
Chebyshev's integral inequality (see, e.g.,
Pe\v{c}ari\'{c} \textit{et al.}, 1992;
Mitrinovi\'{c} \textit{et al.}, 1993; and references therein)
and has been widely utilized when solving a number of problems
in Economics and Finance (see, e.g., Broll \textit{et al.}, 2010;
Egozcue \textit{et al.}, 2010; and references therein) where
non-negativity of the covariance $\mathbf{Cov}[ u(X), v(X)]$
plays a pivotal role.

The following assumption now becomes natural, and we let it hold
throughout the paper without explicitly mentioning it.

\begin{assumption}\rm \label{assumption-onw}
For every (loading) parameter $\lambda > 0$,
the weight function $x\mapsto w(\lambda, x)$
is non-decreasing and non-negative,
and we also assume that the function is
Borel-measurable so that expectations are calculable.
\end{assumption}

The rest of the paper is organized as follows.
In Section \ref{section-2} we establish a fundamental for this paper
result stating that log-supermodularity of the function
$(\lambda, x)\mapsto w(\lambda, x)$ implies loading monotonicity
of the premium $H[\lambda,X]$.
In Section \ref{section-3} we provide a set of parametric weight
functions that either lead to known insurance premiums or to new ones,
and we also verify log-supermodularity of the functions thus establishing
loading monotonicity of the corresponding weighted premiums.
In Section \ref{section-4} we work out conditions under which loading monotonicity
turns into loading strict-monotonicity. In Section \ref{section-5} we
specify conditions under which the function $\lambda \mapsto H[\lambda,X]$
is right-continuous, and even continuous. Some results of technical nature
are relegated to Section \ref{section-6}.

Given the above outline, one may wonder if
our research of monotonicity and continuity of
the function $\lambda \mapsto H[\lambda,X]$ has been driven by
mathematical curiosity or actuarial considerations.
The answer is `both'. Originally, our interest was inspired by
an insurance-related problem, which subsequently
brought in a number of interesting mathematical issues.

To explain the original problem, assume that an insurer favors using
$H[\lambda,X ]$ in all in-house premium calculations and thus wishes to
convert into it all the other premiums $\pi[X]$ in use.
Hence, for each such $\pi[X]$, the insurer wishes to know $\lambda $
such that $H[\lambda,X]=\pi[X]$. Depending on the form of $H[\lambda,X ]$,
there might be several values of $\lambda $
that give the equality $H[\lambda,X ]=\pi[X]$, or
there might be none such $\lambda $. If at least one $\lambda $ exists,
then one may still wish to know whether this $\lambda $ is the only one or not.
Answering such questions, naturally, relies on monotonicity and
continuity properties of the function $\lambda \mapsto H[\lambda,X]$,
and they in turn depend on those of the function $(\lambda, x)\mapsto w(\lambda, x)$
coupled with properties of the loss variable $X$.
Sorting out these issues in detail is our goal in the present paper.
In the context of the aforementioned distortion premium,
the problem has been posed and justified from the actuarial point of view
by Jones and Zitikis (2007).

\section{Log-supermodularity and loading monotonicity}
\label{section-2}

To formulate the main result of this section, which is
Theorem \ref{th-1} below, we need to recall some definitions.
Function $(\lambda, x)\mapsto w(\lambda, x)$ is called log-supermodular if
$(\lambda, x)\mapsto \log( w(\lambda, x))$ is supermodular,
which is equivalent to saying that
$L(\lambda, x) = - \log( w(\lambda, x))$ is submodular.
The function $(\lambda, x)\mapsto L(\lambda, x) $ is submodular if
\begin{align}
\label{boza-1}
L(\theta, x_1) + L(\lambda, x_2) \le L(\theta, x_2) + L(\lambda, x_1)
\end{align}
whenever $\theta \le \lambda $ and $x_1 \le x_2$.

The way we have here presented the definition of submodularity
is to facilitate computations. The standard
way is in the form of the bound
$L(\boldsymbol{y} \wedge \boldsymbol{z})
+ L(\boldsymbol{y} \vee \boldsymbol{z})
\le L(\boldsymbol{y}) + L(\boldsymbol{z})$,
where the minimum $\boldsymbol{y} \wedge \boldsymbol{z}$ and the maximum
$\boldsymbol{y} \vee \boldsymbol{z}$ between the vectors
$\boldsymbol{y}=(y_1,y_2) $ and $\boldsymbol{z}=(z_1,z_2) $
are taken coordinatewise. For details on submodular, supermodular, and other
related functions, we refer to, for example,
Fujishige (1991), Narayanan (1997), and Topkis (2001).
A number of elementary ways for constructing submodular functions are
listed in Table I on p.~312 of Topkis (1978).
Note also that if the function $L$ is sufficiently smooth,
then its submodularity is equivalent to the bound
$(\partial^2/ \partial \lambda \partial x) L(\lambda, x) \le 0$
for all $(\lambda, x) $. We shall utilize this criterion frequently in
this paper.

The following theorem is a fundamental result in the context of the present paper.

\begin{theorem} \label{th-1}
If $(\lambda, x)\mapsto w(\lambda, x)$ is log-supermodular,
then $\lambda \mapsto H[\lambda,X]$ is non-decreasing.
\end{theorem}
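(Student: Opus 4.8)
The plan is to fix two parameters $\theta\le\lambda$, both assumed to lie in $\Lambda[w,X]$ so that the two premiums are well defined, and to establish $H[\theta,X]\le H[\lambda,X]$ directly. Because the normalizing expectations $\mathbf{E}[w(\theta,X)]$ and $\mathbf{E}[w(\lambda,X)]$ are strictly positive, this inequality is, after clearing denominators, equivalent to
\[
\Delta:=\mathbf{E}[Xw(\lambda,X)]\,\mathbf{E}[w(\theta,X)]-\mathbf{E}[Xw(\theta,X)]\,\mathbf{E}[w(\lambda,X)]\ge 0 .
\]
So the first step is to rewrite $\Delta$, which is a difference of products of one-dimensional expectations, as a single expectation over a product space whose integrand can be signed pointwise.

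To this end I would introduce two independent copies $X_1$ and $X_2$ of the loss variable $X$ and write each product as an expectation on the product space, for instance $\mathbf{E}[Xw(\lambda,X)]\,\mathbf{E}[w(\theta,X)]=\mathbf{E}[X_1w(\lambda,X_1)w(\theta,X_2)]$. Since $X_1$ and $X_2$ are identically distributed, I would symmetrize by interchanging their roles inside the expectation and averaging the two resulting expressions; a brief algebraic regrouping of the four cross-terms then yields
\[
\Delta=\frac{1}{2}\,\mathbf{E}\!\left[(X_1-X_2)\big(w(\lambda,X_1)w(\theta,X_2)-w(\theta,X_1)w(\lambda,X_2)\big)\right].
\]
It therefore suffices to prove that the integrand is non-negative almost surely. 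This maneuver is exactly the one underlying the classical Chebyshev/Lehmann inequality recalled in the Introduction.

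The key step, and the only place where the hypothesis is used, is the pointwise sign analysis of the bracketed factor $D:=w(\lambda,X_1)w(\theta,X_2)-w(\theta,X_1)w(\lambda,X_2)$. On the event $\{X_1\le X_2\}$ I would apply the submodularity inequality \eqref{boza-1} for $L=-\log w$ to the ordered pair $X_1\le X_2$ (with $\theta\le\lambda$), which in multiplicative form reads $w(\lambda,X_1)w(\theta,X_2)\le w(\theta,X_1)w(\lambda,X_2)$, so that $D\le 0$ while $X_1-X_2\le 0$; on the complementary event $\{X_1>X_2\}$ the same inequality applied to the reversed pair gives $D\ge 0$ while $X_1-X_2>0$. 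In either case the product $(X_1-X_2)D$ is non-negative, and it vanishes on $\{X_1=X_2\}$. Taking expectations gives $\Delta\ge 0$ and hence the claim. Phrasing the hypothesis multiplicatively rather than through logarithms is deliberate, since it keeps the argument valid even where the weight vanishes.

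The conceptual heart of the argument is thus the clean match between the symmetrized integrand and the definition of log-supermodularity; the remaining care is technical. The main point to watch is integrability: one must justify passing to the product space (so that the Fubini-type rewriting of the products of expectations is legitimate) and ensure that the subtraction defining $\Delta$ does not produce an indeterminate $\infty-\infty$. Restricting $\theta$ and $\lambda$ to $\Lambda[w,X]$ guarantees finite numerators and positive denominators, which should suffice to make every manipulation rigorous.
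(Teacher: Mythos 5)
Your proof is correct, but it takes a genuinely different route from the paper's. The paper first proves that log-supermodularity is equivalent to a factorization property: for every $\theta\le\lambda$ there is a non-decreasing $h=h_{\theta,\lambda}$ with $w(\lambda,x)=h(x)w(\theta,x)$ (equation \eqref{condition-1}); it then invokes the weighted-cdf machinery, namely $F_{hw}\le F_w$ from \eqref{w-cdf-2}, to conclude that $F_{w(\lambda,\cdot)}\le F_{w(\theta,\cdot)}$ (first-order stochastic dominance of the weighted distributions), and reads off the ordering of the premiums from the representation \eqref{eqn-EH}. You instead symmetrize the cross-product difference $\Delta$ over two independent copies of $X$ and sign the integrand pointwise using the multiplicative form of submodularity, which is exactly the paper's inequality \eqref{submod-ineq}; this is a direct two-parameter generalization of the Chebyshev--Lehmann covariance argument mentioned in the Introduction. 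Both proofs ultimately rest on the same correlation inequality, but your version is shorter and more self-contained, avoids the weighted-distribution formalism, and sidesteps the $w=0$ issue by staying multiplicative throughout; your attention to integrability (all four cross-terms are finite products of finite one-dimensional expectations for $\theta,\lambda\in\Lambda[w,X]$, noting that $\mathbf{E}[Xw(\lambda,X)]<\infty$ together with monotonicity of $w(\lambda,\cdot)$ forces $\mathbf{E}[w(\lambda,X)]<\infty$) is adequate. What the paper's longer route buys is the factorization characterization \eqref{condition-1} itself, which it explicitly flags as a reformulation of log-supermodularity of independent interest and reuses later (e.g., for $w_1$ and $w_2$ in Theorem \ref{th-logsuper} and, via the stochastic-dominance picture, in the strict-monotonicity argument of Theorem \ref{th-1a}); your argument does not produce that by-product.
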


To prepare for the proof of Theorem \ref{th-1}, we need to recall
weighted distributions and some of their properties
(see, e.g., Furman and Zitikis, 2008, and references
therein). Thus, assume that we are dealing with a non-decreasing function $w$,
indexed or not. Denote the cdf of $X \in \mathcal{X}_+$ by $F$.
The weighted cdf $F_w$ is defined by
\[
F_w(x) ={\mathbf{E}[ \mathbf{1}\{X\le x\} w(X)] \over \mathbf{E}[w(X)]},
\]
where $\mathbf{1}\{S\}$ is equal to $1$ if statement $S$ is true and $0$
otherwise. Note the equation
$F_w(x)=\int_{\Omega} \mathbf{1}\{X(\omega )\le x\}\mathbf{P}_w(d\omega )$
connects $F_w$ and $\mathbf{P}_w$ in the same way
as the original $F$ and $\mathbf{P}$ are connected by the usual definition
of the cdf $F(x)=\mathbf{P}[X\le x]$.
The support of $F_w$ is $[0,\infty)$.

The importance of the weighted cdf in the current context
is reflected by the fact that if $X_w$ is a random variable with
the cdf $F_w$, then the weighted premium $H_w[X]$ is
the mean $\mathbf{E}[X_w] $. Indeed, using the Fubini theorem, we have that
\begin{align}
H_w[X]
= \int_{[0,\infty)} \frac{\mathbf{E}[\mathbf{1}\{X > x \}w(X)]}{\mathbf{E}[w(X)]} dx
= \int_{[0,\infty)} (1-F_w(x)) dx
=\mathbf{E}[X_w].
\label{eqn-EH}
\end{align}
Equation (\ref{eqn-EH}) plays a major role in establishing Theorem \ref{th-1}
as well as some other results in later sections.
The following properties are known (see, e.g., Furman and Zitikis, 2008,
and references therein) and will be used a number of times
in this paper:
\begin{itemize}
\item
For any two non-decreasing and non-negative functions $u$ and $w$, we have
\begin{equation}
F_{uw} = (F_u)_w=(F_w)_u,
\label{w-cdf-1}
\end{equation}
where $F_{uw}$ is the weighted cdf
corresponding to the product $u(x)w(x)$ of the two functions $u(x)$ and $w(x)$.
\item
For any non-decreasing and non-negative function $w$, we have
$F_{w} \le F$ and, consequently,
for any non-decreasing and non-negative functions $w$ and $ h$,
we have
\begin{equation}
F_{hw} \le F_w \le F.
\label{w-cdf-2}
\end{equation}
\end{itemize}

\begin{proof}[\bf Proof of Theorem \ref{th-1}]

We start by showing that
$L(\lambda, x) = - \log( w(\lambda, x))$ is
submodular if and only if for every pair $\theta \le \lambda$
 there is a non-decreasing function
$h= h_{\theta, \lambda}$ such that
\begin{equation}
w(\lambda,x) = h(x)w(\theta,x).
\label{condition-1}
\end{equation}
(We do not know if this reformulation of log-supermodularity
has been noted in the literature, but we find it invaluable
in the context of the present paper.)
Assuming for the time being that decomposition (\ref{condition-1}) holds,
we next show how to utilize it for proving Theorem \ref{th-1}.

Fix any pair $\theta \le \lambda$ and let $h$
be a non-decreasing function whose existence is postulated
by decomposition (\ref{condition-1}).
In view of bounds (\ref{w-cdf-2}), we have $F_{w(\lambda, \cdot)} \le
F_{w(\theta,\cdot)}$. Using equations (\ref{eqn-EH}),
we therefore have
$\mathbf{E}[X_{w(\lambda, \cdot)}] \ge  \mathbf{E}[X_{w(\theta,\cdot)}]$
and thus $H[\lambda,X] \ge H[\theta,X]$, which is
the claim of Theorem \ref{th-1}.
We are left to demonstrate the equivalence of submodularity of
$L(\lambda, x)$ and the existence of a non-decreasing function $h$ such that
decomposition (\ref{condition-1}) holds.

Fix any pair $\theta \le \lambda$.
The submodularity of $L(\lambda, x)$ implies that, for $x_1\le x_2$,
\begin{equation}
\label{submod-ineq}
w(\theta, x_1)w(\lambda, x_2) \ge w(\theta, x_2)w(\lambda, x_1).
\end{equation}
Define $y_{\Delta}= \sup \{x \in [0, \infty) : w(\Delta , x) = 0\}$
for $\Delta \in \{\theta, \lambda\}$.
If $y_\theta = -\infty$, that is, if
$w(\theta, x) > 0$ for all $x$, then the existence of the
aforementioned function $h(x)$ is trivial. Otherwise, for every $x_1$ with
$w(\theta, x_1)=0$, we choose $x_2 > y_\theta$, and then
bound (\ref{submod-ineq}) implies $w(\lambda, x_1)=0$ thus
showing that $y_\theta \le y_\lambda$. We can now define
the sought after function $h$ as follows:
\[
h(x) = \left\{
\begin{array}{ll}
0 & \mbox{ when } x > 0 \mbox{ is such that } w(\theta, x) = 0, \\
w(\lambda, x)/w(\theta, x) & \mbox{ otherwise. }
\end{array}
\right.
\]
Note that $h$ is non-decreasing, as is implied by
bound (\ref{submod-ineq}).

Conversely, suppose that there is a non-decreasing function $h
= h_{\theta, \lambda}$ such that $w(\lambda, x) = h(x)
w(\theta, x)$. Then $y_\theta \le y_\lambda$ and
$w(\theta, y_\theta) = 0$ imply  $w(\lambda, y_\theta) = 0$. For
any $x_1 \le x_2$ we consider several cases depending
on the position of $x_1$ and $ x_2$ relative to $y_\theta$ and $y_\lambda$,
and easily conclude that bound (\ref{submod-ineq}) holds. This shows that
$L(\lambda, x) = - \log( w_\lambda (x))$ is submodular,
and thus completes the proof of Theorem \ref{th-1}.
\end{proof}

\section{Seven classes of weight functions}
\label{section-3}

To show the encompassing nature and power of Theorem \ref{th-1},
we next present seven illustrative examples of log-supermodular functions.
The first three examples correspond to the Esscher,
conditional tail expectation (CTE), and the Kamps premiums,
which are exceptionally well-known in the actuarial literature.
The other four examples are `mathematical inventions', but their
varying convex or concave shapes hint at potential usefulness. Moreover,
we have to note that the mathematical literature is not particularly generous with
examples of log-supermodular functions, and thus the ones
that we offer here can be viewed as contributions to the area of
Mathematical Analysis and especially of Function Theory.
Since the four weight functions are somewhat complex,
we shall supplement their definitions with graphs.

\begin{example}\rm\label{example1.1}\hspace*{1mm}
\begin{enumerate}
  \item Let $w_1(\lambda, x)=e^{\lambda x}$.
  The corresponding
  weighted premium $H[\lambda,X]$ is known in the literature
  as the Esscher premium  (see, e.g., Denuit \textit{et al.}, 2005, and
  references therein).
  \item  Let $w_2(\lambda,x)=\mathbf{1}\{x> \lambda \}$.
  The corresponding
  weighted premium $H[\lambda,X]$, which can be written as
  $\mathbf{E}[X|X> \lambda]$, is known as
  the conditional tail expectation (CTE).
  We refer to Denuit \textit{et al.} (2005)
  for detailed information on the CTE premium.
  \item  Let
  $w_3(\lambda,x) = 1-e^{-x/\lambda }$.
The corresponding
  weighted premium $H[\lambda,X]$ is known
  as the Kamps premium (Kamps, 1998; see also Furman and Zitikis, 2008, 2009).
  \item  Let
  $w_4(\lambda,x) = e^{((1+x)^\lambda -1) / \lambda}-x $.
  The fact that $x\mapsto w_4(\lambda, x)$ is non-decreasing
  for every $\lambda > 0$ is easy to establish.
\begin{figure}[h!]
\includegraphics[width=10cm]{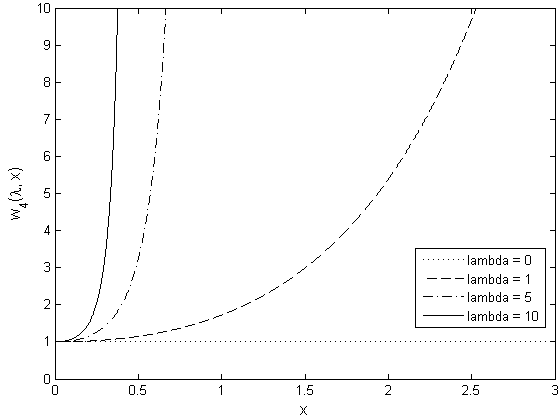}
\caption{The function $x\mapsto w_4(\lambda, x)$. When $x\downarrow 0$,
then $w_4(\lambda, x)$ converges to $1$ for every $\lambda >0$.}
\label{fig-w4}
\end{figure}
    \item  Let $w_5(\lambda,x) = \big((1+\lambda)^x -1 \big) / (x\lambda) $.
  It is easy to show that $x\mapsto w_5(\lambda, x)$ is non-decreasing
  for every $\lambda > 0$.
\begin{figure}[h!]
\includegraphics[width=10cm]{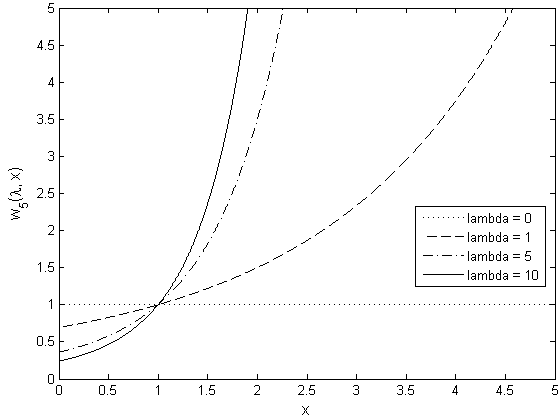}
\caption{The function $x\mapsto w_5(\lambda, x)$. When $x\downarrow 0$,
then $w_5(\lambda, x)$ converges to $\log(1+\lambda)/\lambda $ for every $\lambda >0$.}
\label{fig-w5}
\end{figure}
   \item  Let $ w_6(\lambda,x) = (x\lambda)/\log(1+x\lambda) $.
  It is easy to check that $x\mapsto w_6(\lambda, x)$ is non-decreasing
  for every $\lambda > 0$.
\begin{figure}[h!]
\includegraphics[width=10cm]{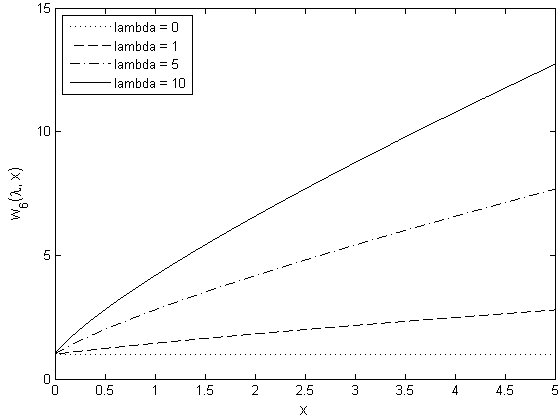}
\caption{The function $x\mapsto w_6(\lambda, x)$. When $x\downarrow 0$,
then $w_6(\lambda, x)$ converges to $1$ for every $\lambda >0$.}
\label{fig-w6}
\end{figure}
    \item
    \label{example1.1-7}
    Let $ w_7(\lambda,x) =  \frac{\log(1+x+\lambda)}{x+\lambda}\frac{x}{\log(1+x)}$.
   The fact that $x\mapsto w_7(\lambda, x)$ is non-decreasing
   for every $\lambda > 0$ is proved in Lemma~\ref{lma-64}.
\begin{figure}[h!]
\includegraphics[width=10cm]{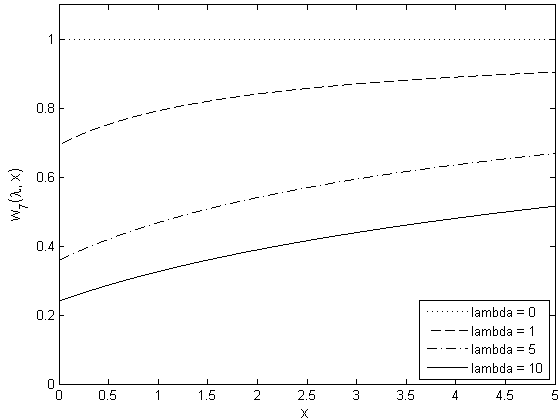}
\caption{The function $x\mapsto w_7(\lambda, x)$. When $x\downarrow 0$,
then $w_7(\lambda, x)$ converges to $\log(1+\lambda)/\lambda $
(same limit as in Figure \ref{fig-w5}) for every $\lambda >0$.}
\label{fig-w7}
\end{figure}
 \end{enumerate}
\end{example}

From mathematical definitions and accompanying graphs,
we see monotonic features of the functions $w_i(\lambda, x)$ with
respect to $\lambda $. Specifically, for every pair $\theta < \lambda $,
we check -- in most cases quite easily -- the following properties:
\begin{itemize}
  \item $w_1(\theta, x) < w_1(\lambda, x)$
  for all $x \in (0,\infty)$

  \item $w_2(\theta, x) \ge w_2(\lambda, x)$
  for all $x \in (0,\infty)$

  \item $w_3(\theta, x) > w_3(\lambda, x)$
  for all $x \in (0,\infty)$

  \item $w_4(\theta, x) < w_4(\lambda, x)$
  for all $x \in (0,\infty)$, which follows from the easy-to-verify inequality
  $((1+x)^\theta -1 )/\theta < ((1+x)^\lambda -1)/\lambda$

  \item
  $w_5(\theta, x) > w_5(\lambda, x)$ for all  $x \in (0,1)$
  \item[]
  $w_5(\theta, x) = w_5(\lambda, x)=1$ when  $x =1$
  \item[]
  $w_5(\theta, x) < w_5(\lambda, x)$ for all  $x > 1$

  \item $w_6(\theta, x) < w_6(\lambda, x)$
  for all $x \in (0,\infty)$

  \item $w_7(\theta, x) > w_7(\lambda, x)$
  for all $x \in (0,\infty)$; see Lemma~\ref{lma-64}
\end{itemize}
These monotonicity properties play important roles when
establishing strict monotonicity of the corresponding functions
$\lambda \mapsto H[\lambda,X]$ in Section \ref{section-4}.
They are also helpful and indeed decisive
when determining the set $\Lambda[w,X]$ of those $\lambda>0$
for which the premium $H[\lambda,X]$ is well-defined and finite.

Certainly, in the context of the present section, it is important
to check log-supermodularity of the seven functions
of Example \ref{example1.1}. This makes the content of the following theorem.

\begin{theorem} \label{th-logsuper}
The seven function $(\lambda, x)\mapsto w_i(\lambda, x)$, $1\le i \le 7$,
are log-supermodular.
\end{theorem}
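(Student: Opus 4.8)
The plan is to verify log-supermodularity of each $w_i$ through the smoothness criterion recorded in Section~\ref{section-2}: writing $L_i=-\log w_i$, submodularity of $L_i$ is equivalent to $(\partial^2/\partial\lambda\,\partial x)L_i\le 0$, i.e.\ to $(\partial^2/\partial\lambda\,\partial x)\log w_i(\lambda,x)\ge 0$, whenever $w_i$ is positive and smooth. All seven functions except the indicator $w_2$ fall under this criterion, so for them the whole task reduces to computing one mixed partial and checking its sign; the non-smooth case $w_2$ I would instead dispatch directly from inequality (\ref{submod-ineq}).

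Before computing, I would record two structural simplifications that trivialize most of the work. First, any summand of $\log w_i$ depending on $x$ alone or on $\lambda$ alone has vanishing mixed partial and may be discarded; this removes the factor $x/\log(1+x)$ from $w_7$ and the terms $-\log x-\log\lambda$ from $w_5$. Second, when the dependence on $(\lambda,x)$ enters only through a combined variable, the mixed partial collapses to a one-variable expression: with $\psi=\log g$, if $w=g(x\lambda)$ then $(\partial^2/\partial\lambda\,\partial x)\log w=\big(t\psi'(t)\big)'\big|_{t=x\lambda}$; if $w=g(x/\lambda)$ it equals $-\lambda^{-2}\big(t\psi'(t)\big)'\big|_{t=x/\lambda}$; and if $w=g(x+\lambda)$ it equals $(\log g)''(x+\lambda)$. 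Hence log-supermodularity reduces, respectively, to $t\psi'(t)$ being non-decreasing, $t\psi'(t)$ being non-increasing, and $\log g$ being convex.

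These reductions settle five of the seven cases with only elementary one-variable calculus. For $w_1=e^{\lambda x}$ the mixed partial is identically $1$. For $w_3=g(x/\lambda)$ with $g(t)=1-e^{-t}$ one finds $t\psi'(t)=t/(e^{t}-1)$, whose monotone decrease on $(0,\infty)$ is classical. For $w_6=g(x\lambda)$ with $g(t)=t/\log(1+t)$, and for $w_7$ reduced to $g(x+\lambda)$ with $g(s)=\log(1+s)/s$, the required monotonicity and convexity follow after differentiating and checking the sign of the resulting elementary expressions, which I expect to reduce to standard comparisons among $\log(1+t)$, $t$, and $t/(1+t)$. The indicator $w_2$ is immediate from (\ref{submod-ineq}): for $\theta\le\lambda$ and $x_1\le x_2$, if the right-hand side is positive then $x_1>\lambda\ge\theta$ and $x_2\ge x_1>\lambda$, forcing the left-hand side to equal $1$ as well.

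The two genuinely computational cases are $w_5$ and, above all, $w_4$. For $w_5$, after discarding the separable terms I would differentiate $\log\big((1+\lambda)^x-1\big)$; the $x$-derivative is $(1+\lambda)^x\log(1+\lambda)/\big((1+\lambda)^x-1\big)$, and differentiating once more in $\lambda$ and clearing the positive denominator should leave a sign that I expect to control by the monotonicity of an auxiliary function of $(1+\lambda)^x$. The hard part will be $w_4=e^{((1+x)^\lambda-1)/\lambda}-x$: the subtracted $x$ destroys every separable and combined-variable structure, so none of the reductions apply and the mixed partial of $\log\big(e^{A(\lambda,x)}-x\big)$, with $A(\lambda,x)=((1+x)^\lambda-1)/\lambda$, must be computed in full. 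My intended route is to exploit that $A$ is itself jointly well-behaved and that $e^{A}$ dominates $x$, so that after multiplying through by the positive factor $(e^{A}-x)^2$ the sign of the cross term is governed by $A_{\lambda}A_{x}$ against the curvature terms; pinning down that this combination stays non-negative is where I anticipate the main effort will lie.
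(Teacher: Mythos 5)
Your overall strategy coincides with the paper's: apart from the indicator $w_2$, the paper also verifies $(\partial^2/\partial\lambda\,\partial x)L\le 0$ case by case, and your structural reductions (discarding separable summands, collapsing $g(x\lambda)$, $g(x/\lambda)$, $g(x+\lambda)$ to one-variable monotonicity or convexity statements) are correct and genuinely streamline the cases $w_3$, $w_6$, $w_7$; your direct treatment of $w_2$ via (\ref{submod-ineq}) is also fine and equivalent to the paper's factorization $w(\lambda,x)=h(x)w(\theta,x)$.

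The genuine gap is $w_4$, which you correctly identify as the case where no reduction applies but then leave unproven: you only state that after clearing $(e^{A}-x)^2$ the sign ``is governed by $A_\lambda A_x$ against the curvature terms'' and that pinning this down ``is where the main effort will lie.'' That effort is precisely the substance of the theorem for this case. In the paper the mixed partial leads to equation (\ref{lm-61}), whose positivity is a separate result (Lemma \ref{lma-61}); after the substitution $y=((1+x)^\lambda-1)/\lambda$ it reduces to the inequality (\ref{rel-error}), namely $e^y>(\lambda y+1)^{1/\lambda}\bigl(1+y-\lambda y^2/((\lambda y+1)\log(\lambda y+1))\bigr)$, a quantitative strengthening of $e^y>(\lambda y+1)^{1/\lambda}$ that is established by showing an auxiliary function $g(y)$ vanishes at $0$ and has positive derivative. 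Nothing in your outline anticipates this inequality or an equivalent, so the hardest seventh of the theorem is asserted rather than proved. A smaller instance of the same issue occurs for $w_7$: the convexity of $\log\bigl(\log(1+s)/s\bigr)$ amounts to $\bigl((1+s)\log(1+s)\bigr)^2>s^2\bigl(1+\log(1+s)\bigr)$, which does not follow from the ``standard comparison'' $(1+s)\log(1+s)>s$ alone (that only gives the weaker bound without the $\log(1+s)$ term); the paper needs two further rounds of differentiation (Lemma \ref{lma-62}) to close it. You should either carry out these two verifications or import the corresponding lemmas explicitly.
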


\begin{proof}
In the case of the first two functions of
Example \ref{example1.1}, we shall use the noted (and proved) fact that
$(\lambda, x)\mapsto w(\lambda, x)$ is log-supermodular if and only if
for every pair $\theta \le \lambda$ there is a non-decreasing function
$h(x) = h_{\theta, \lambda}(x)$ such that equation (\ref{condition-1}) holds.
In the case of the remaining five functions
of Example \ref{example1.1}, we shall use the fact that the function
$(\lambda, x)\mapsto w(\lambda, x)$ is log-supermodular if and only if
$(\partial^2/ \partial \lambda \partial x) L(\lambda, x) \le 0$
for all $(\lambda, x) $. Recall that $L(\lambda, x) = - \log( w(\lambda, x))$.
\begin{enumerate}
  \item
  The function $w_1(\lambda, x)$ is log-supermodular.
  This follows from decomposition (\ref{condition-1}) with
  $h(x) = e^{(\lambda-\theta)x}$, which is an increasing function
  whenever $\theta < \lambda$.

  \item
  The function $w_2(\lambda, x)$ is log-supermodular,
  which follows from decomposition (\ref{condition-1}) with
$h(x) = \mathbf{1}{\{x> \lambda \}}$, which is non-decreasing.

    \item
    The log-supermodularly of $w_3(\lambda, x)$ follows from
    $(\partial^2/ \partial \lambda \partial x) L(\lambda, x) \le 0$.
    Indeed, since
    $L(\lambda, x) = -\log(1-e^{-x/\lambda})$, we have that
    $$
    \frac{\partial^2}{\partial \lambda \partial x} L(\lambda, x)
    = -\frac{e^{-x/\lambda}(e^{-x/\lambda}-1+x/\lambda)}{\lambda^2(e^{-x/\lambda}-1)^2} < 0.
    $$

 \item
 The function $w_4(\lambda, x)$ is log-supermodular, for which we first check the equation
\begin{align}
- & (x+1) \lambda^2 \Big(e^{-((1+x)^\lambda - 1)/\lambda} \Big)
\Big( e^{((x+1)^{\lambda}-1)/\lambda}-x\Big)^2
 \frac{\partial^2}{\partial \lambda \partial x}L(\lambda, x)
\notag
\\
&= \lambda^2 (x+1)^\lambda \log(x+1)
\Big(e^{((x+1)^\lambda-1)/\lambda}-x\Big)
\notag
\\
&\quad +\big(\lambda(x+1)^\lambda \log(x+1)-(x+1)^\lambda+1\big)
\big(1-((x+1)^\lambda-1) x \big).
\label{lm-61}
\end{align}
The right-hand side of equation (\ref{lm-61}) is positive
for all $\lambda > 0$ and $x > 0$. This we formulate as
Lemma \ref{lma-61} and prove in Section \ref{section-6}.

  \item
  The function $w_5(\lambda, x)$ is log-supermodular because
    $(\partial^2/ \partial \lambda \partial x) L(\lambda, x) \le 0$,
    which we prove as follows:
  $$
   \frac{\partial^2}{\partial \lambda \partial x} L(\lambda, x)
   = \frac{(1+\lambda )^{x-1}}{((1+\lambda)^x-1)^2}
   \Big(x\log(1+\lambda) + 1 - (1+\lambda)^x \Big).
  $$
 Since $x\log(1+\lambda) + 1 - (1+\lambda)^x < 0$ for all $\lambda > 0$ and $x > 0$,
 this establishes the result.
  \item
  The function $w_6(\lambda, x)$ is log-supermodular
  because
    $(\partial^2/ \partial \lambda \partial x) L(\lambda, x) \le 0$. Indeed, since
  $L(\lambda, x) = -\log(x\lambda / \log(1+x\lambda))$
  we have that
  $$
   \frac{\partial^2}{\partial \lambda \partial x} L(\lambda, x)
   = \frac{\log(1+x \lambda)-x\lambda}{(\log(1+x\lambda)
   (1+x \lambda))^2} < 0.
  $$

  \item
  The function $w_7(\lambda, x)$ is log-supermodular because
    $(\partial^2/ \partial \lambda \partial x) L(\lambda, x) \le 0$,
    which we prove by first establishing the equation:
  \begin{align}
   - &
   \Big ((1+x+\lambda)(x+\lambda)\log(1+x+\lambda)\Big )^2
   \frac{\partial^2}{\partial \lambda \partial x} L(\lambda, x)
   \notag
   \\
   &= \big ((1+x+\lambda) \log(1+x+\lambda) \big )^2
   -(x+\lambda)^2 \log(1+x+\lambda)
   -(x+\lambda)^2 .
  \label{lm-62}
  \end{align}
The right-hand side of equation (\ref{lm-62}) is positive,
which we formulate as
Lemma \ref{lma-62} and prove in Section \ref{section-6}.
\end{enumerate}
This completes the proof of Theorem \ref{th-logsuper}.
\end{proof}

We now reflect upon the above proof
in the context of the easily checked fact that:
\begin{itemize}
  \item $L(\lambda, x) = \alpha f(\lambda^\alpha x)$ is submodular
for any non-increasing and concave function $f$ and any real
number $\alpha $.
\end{itemize}
Hence, submodularity of $-\log(w_1(\lambda, x))$
follows by choosing  $\alpha =1$ and $f(t)=-t$.
Note, however, that submodularity of $-\log(w_3(\lambda, x))$
does not follow from such arguments since, with $\alpha =-1$,
the function $f(t)=\log(1-e^{-t})$ is increasing, though concave.
Submodularity of $-\log(w_6(\lambda, x))$
does not follow from the arguments either because, with $\alpha =1$,
the function $f(t)=-\log(t/\log(1+t))$ is convex, though decreasing.
Finally we note that submodularity of the weight functions
$-\log(w_i(\lambda, x))$, $i\in \{ 5,6,7\}$, does not follow
from any of the constructions given in Table I on p.~312 of Topkis (1978).

\section{Loading strict-monotonicity}
\label{section-4}

To prove that the function $\lambda \mapsto
H[\lambda,X]$ is (strictly) increasing, we need additional
assumptions.

\begin{assumption}\rm \label{assump-1a}
For every $x$, the function $\lambda \to w(\lambda, x)$ is monotonic.
\end{assumption}

The assumption does not require $\lambda \to w(\lambda, x)$
to be increasing, nor even non-decreasing, which at first sight
might be somewhat surprising. However, recall the function
$\lambda \mapsto w_5(\theta, x)$. It might be increasing, decreasing,
or constant, depending on the value of $x>0$.

Our next assumption imposes a kind of strict monotonicity on the function
$\lambda \to w(\lambda, x)$ by requiring, loosely speaking, that if
there is a point $x$ such that the function is constant, then the point
should not be `visible', that is, the loss variable $X$ should not
take on the value $x$ almost surely.

\begin{assumption}\rm \label{assump-1b}
For every pair $\theta \ne \lambda $, the set
$\big \{ x \in [0,\infty ) :\, w(\theta, x)=w(\lambda, x) \big \}$
has  $F$-measure zero.
\end{assumption}

To work out some intuition concerning Assumption \ref{assump-1b},
we look at the seven functions $w_i(\lambda, x)$ of Example \ref{example1.1}.
\begin{itemize}
  \item
For every $x>0$,  $w_1(\theta, x) \ne
w_1(\lambda, x)$ whenever $\theta \ne \lambda$.
Same holds for $w_3(\lambda, x)$, $w_4(\lambda, x)$,
and $w_6(\lambda, x)$.
  \item
We have $w_2(\theta, x) = w_2(\lambda, x)$ for every $x\not\in (\theta ,
\lambda]$ when $\theta < \lambda$.
This implies that Assumption \ref{assump-1b} is {\it not} satisfied.
\item
For every $x \in (0,\infty )\setminus \{1\}$, we have
$w_5(\theta, x) \ne w_5(\lambda,x)$ whenever $\theta \ne \lambda$.
When $x =1$, then $w_5(\theta, x) = w_5(\lambda,x)$, and we thus need to
assume $\mathbf{P}[X=1]=0$ in order to make the point $x=1$ `invisible'.
This is equivalent to assuming
the continuity of $F$ at the point $x =1$.
\item
For every $x > 0$, we have $w_7(\theta, x)
\ne w_7(\lambda,x)$ whenever $\theta \ne \lambda$.
\end{itemize}

The following assumption requires, roughly speaking, the existence of
a point in the closure of the half-line $(0,\infty )$ such that
all the functions $x\mapsto w(\lambda, x)$ take on one and same
positive value at the point.

\begin{assumption}\rm \label{assump-1c}
For every pair $(\theta , \lambda ) $ there exists a point $x_0\in [0,\infty ]$
such that $w(\lambda, x_0) =w(\theta, x_0)>0$ and
the functions $x\mapsto w(\lambda, x)$ and $x\mapsto
w(\theta, x)$ are either both left-continuous or both
right-continuous at $x_0$. (If $x_0=0$, then they have to be
right-continuous, while if $x_0=\infty $ they have to be
left-continuous.)
\end{assumption}

Even though the above assumption may look somewhat artificial,
and perhaps even strange, it is nevertheless satisfied by
the seven functions $w_i(\lambda, x)$ of Example \ref{example1.1}:
\begin{itemize}
  \item
  For every $i\in \{1,4,6\}$, we have
$w_i(\lambda, 0)=1$ for every $\lambda > 0$. Thus, $x_0=0$.
These functions $x\mapsto w_i(\lambda,x)$
are continuous everywhere on $(0,\infty )$.

  \item
For every $i \in \{2,3,7\}$, we have $w_i(\lambda, \infty)=1$ for
every $\lambda > 0$. Thus, $x_0=\infty $.
The function $x\mapsto w_2(\lambda, x)$
 is left-continuous on $(0,\infty )$,
 and the functions $x\mapsto w_i(\lambda, x)$ for $i\in \{3,7\}$
are continuous everywhere on $(0,\infty)$.

  \item
  We have $w_5(\lambda, 1)=1$ for every $\lambda > 0$. Thus, $x_0=1$.
  The function $x\mapsto w_5(\lambda, x)$ is continuous everywhere
  on $(0,\infty )$.

\end{itemize}

\begin{theorem} \label{th-1a}
When $(\lambda, x)\mapsto w(\lambda, x)$ is log-supermodular
and Assumptions \ref{assump-1a}--\ref{assump-1c} are satisfied, then
$\lambda \mapsto H[\lambda,X]$ is increasing.
\end{theorem}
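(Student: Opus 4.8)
The plan is to upgrade the non-strict conclusion of Theorem~\ref{th-1} to a strict one by tracking exactly where the stochastic dominance between the two weighted distributions becomes strict. Fix a pair $\theta<\lambda$ and let $h=h_{\theta,\lambda}$ be the non-decreasing factor furnished by decomposition~(\ref{condition-1}), so that $w(\lambda,x)=h(x)w(\theta,x)$. Writing $G=F_{w(\theta,\cdot)}$ and using the composition identity~(\ref{w-cdf-1}), one has $F_{w(\lambda,\cdot)}=(F_{w(\theta,\cdot)})_h=G_h$. By the integral representation~(\ref{eqn-EH}),
\begin{equation*}
H[\lambda,X]-H[\theta,X]=\int_{[0,\infty)}\big(G(x)-G_h(x)\big)\,dx\ge 0 ,
\end{equation*}
and, since $x\mapsto G(x)-G_h(x)$ is a right-continuous, non-negative difference of cdfs, this integral is strictly positive as soon as $G(x)-G_h(x)>0$ at a single point $x$. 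The whole problem therefore reduces to locating one point at which the dominance is strict.

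To detect such a point I would rewrite the gap as a covariance under $\mathbf{P}_{w(\theta,\cdot)}$. Letting $X_G$ denote a random variable with cdf $G$, a direct computation gives
\begin{equation*}
G(x)-G_h(x)=\frac{G(x)\big(1-G(x)\big)}{\mathbf{E}[h(X_G)]}\Big(\mathbf{E}[h(X_G)\mid X_G>x]-\mathbf{E}[h(X_G)\mid X_G\le x]\Big).
\end{equation*}
Because $h$ is non-decreasing, the bracketed difference is $\ge 0$, which re-proves $G_h\le G$; moreover the right-hand side is strictly positive precisely when $G(x)\in(0,1)$ and the two conditional means of $h$ differ. Sandwiching those conditional means by the value $h(x)$ shows that such an $x$ exists if and only if $h$ is not $G$-almost surely constant. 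Hence strict loading monotonicity for the pair $(\theta,\lambda)$ is equivalent to the statement that $h$ is non-constant under $\mathbf{P}_{w(\theta,\cdot)}$, equivalently $F$-a.s.\ on $\{w(\theta,\cdot)>0\}$.

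The remaining task is to deduce this non-constancy from Assumptions~\ref{assump-1a}--\ref{assump-1c}. Assumption~\ref{assump-1c} supplies a point $x_0$ with $w(\lambda,x_0)=w(\theta,x_0)>0$, i.e.\ $h(x_0)=1$; since $h$ is non-decreasing this forces the single-crossing pattern $h\le 1$ on $(0,x_0)$ and $h\ge 1$ on $(x_0,\infty)$, so $\{h<1\}$, $\{h=1\}$, $\{h>1\}$ are ordered intervals, and Assumption~\ref{assump-1a} guarantees that the sign of $h-1$ records whether $\lambda\mapsto w(\lambda,x)$ rises or falls. Assumption~\ref{assump-1b} then makes the middle set $\{h=1\}$ carry $F$-measure zero, hence $G$-measure zero, so that $G$-almost every point lies in $\{h<1\}\cup\{h>1\}$. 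If $h$ were $G$-a.s.\ equal to a constant $c$, the ordering would force $c\ne 1$ and all $G$-mass onto one side of $x_0$; the one-sided continuity of $w(\theta,\cdot)$ and $w(\lambda,\cdot)$ at $x_0$ imposed in Assumption~\ref{assump-1c} makes $h$ continuous at $x_0$ with $h(x_0)=1$, which is incompatible with $h$ collapsing to a constant $c\ne1$ on the portion of the support abutting $x_0$. This forces $h$ to vary under $G$, and then the conclusion $H[\theta,X]<H[\lambda,X]$ follows from the covariance identity above. I expect this last step — converting the pointwise ``visible crossing'' at $x_0$ into genuine $G$-almost-sure variation of $h$, while excluding a spurious constant ratio $c\ne1$ — to be the main obstacle, and the place where all three assumptions must be invoked jointly.
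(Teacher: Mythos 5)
Your argument is correct in substance but follows a genuinely different route from the paper's. The paper proves the theorem by contradiction: assuming $H[\theta,X]=H[\lambda,X]$, it combines the first-order dominance $F_{w(\lambda,\cdot)}\le F_{w(\theta,\cdot)}$ from (\ref{w-cdf-2}) with equality of the means (\ref{eqn-EH}) and right-continuity to force the two weighted cdfs to coincide everywhere, deduces that $w(\lambda,\cdot)$ and $w(\theta,\cdot)$ are proportional $F$-a.s.\ as in (\ref{distr-eq1}), uses Assumption \ref{assump-1c} at $x_0$ to pin the proportionality constant to $1$, and then contradicts Assumption \ref{assump-1b}. You instead work forwards: via (\ref{condition-1}) and (\ref{w-cdf-1}) you express the premium gap as $\int (G-G_h)\,dx$ with $G=F_{w(\theta,\cdot)}$, and your covariance identity for $G(x)-G_h(x)$ (which is correct: with $p=G(x)$ and $m=\mathbf{E}[h(X_G)]$ one gets $p(1-p)(\mathbf{E}[h\mid X_G>x]-\mathbf{E}[h\mid X_G\le x])/m$) reduces strictness exactly to non-constancy of $h$ under $G$, i.e.\ under $F$ on $\{w(\theta,\cdot)>0\}$ --- precisely the endpoint the paper reaches by contradiction. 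What your version buys is an explicit, quantitative formula for the loading gap and a clean necessary-and-sufficient criterion for strictness of each pair $(\theta,\lambda)$; what it costs is the extra bookkeeping about where $G(x)\in(0,1)$. One caution: your last step, ruling out $h\equiv c\ne 1$ by continuity of $h$ at $x_0$ ``on the portion of the support abutting $x_0$,'' tacitly requires $x_0$ to be approachable from within the support of $F$ on the side of one-sided continuity; this is not literally guaranteed by Assumption \ref{assump-1c}. The paper's own step ``equation (\ref{distr-eq1}) must hold at $x=x_0$'' relies on the very same tacit approximation property, so you are no worse off than the original, but you should state the requirement explicitly rather than bury it in the word ``abutting.'' Also note that, like the paper, you never actually need Assumption \ref{assump-1a} in the argument; its role is cosmetic in both proofs.
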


\begin{proof}
We prove by contradiction. Let there be $\theta < \lambda $ such
that $H[\theta,X]=H[\lambda,X]$. This can be rewritten as
$\mathbf{E}[X_{w(\theta,\cdot)}]=\mathbf{E}[X_{w(\lambda, \cdot)}]$.
On the other hand, from bounds (\ref{w-cdf-2}) we have that
$F_{w(\lambda,\cdot)}\le F_{w(\theta, \cdot)}$ for all $x\ge 0$.
Hence, the cdf's $F_{w(\lambda, \cdot)}$ and $F_{w(\theta,\cdot)}$
must coincide, except possibly on a set of
Lebesque measure zero. Since these functions are
right-continuous, they should therefore coincide for all $x\ge 0$.
In other words, we have the equation
\begin{equation}
\label{distr-eq}
\frac{ \mathbf{E} [ \mathbf{1}\{X\le x\} w(\lambda, X) ]
}{\mathbf{E}[w(\lambda,X)]} = \frac{\mathbf{E} [
\mathbf{1}\{X\le x\} w(\theta, X)]}{\mathbf{E}[w(\theta, X)]}
\end{equation}
for all $x\ge 0$, and thus, in turn,
\begin{equation} \label{distr-eq1}
{w(\lambda, x) \over \mathbf{E}[w(\lambda, X)]}
={w(\theta, x) \over \mathbf{E}[w(\theta, X)]}
\end{equation}
for $F$-almost all $x\ge 0$.
Combining the latter equation with the right- or left-continuity postulated
in Assumption \ref{assump-1c}, we have
that equation (\ref{distr-eq1}) must hold at $x=x_0$. Since
$w(\lambda, x_0)=w(\theta, x_0)$ by Assumption
\ref{assump-1c}, the expectations $\mathbf{E}[w(\lambda, X)]$
and $\mathbf{E}[w(\theta, X)]$ coincide. But then
equation (\ref{distr-eq1}) says that $w(\lambda, x)=w(\theta, x)$
for $F$-almost all $x \ge 0$. This contradicts Assumption \ref{assump-1b}
and thus finishes the proof of Theorem \ref{th-1a}.
\end{proof}

\begin{corollary} \label{corlll-1a}
For every weight function $(\lambda, x)\mapsto w_i(\lambda, x)$,
$i\in \{1,\dots , 7\} \setminus \{2\}$, the corresponding function
$\lambda \mapsto H[\lambda,X]$ is increasing.
\end{corollary}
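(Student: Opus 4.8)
The plan is to obtain the corollary as a direct application of Theorem~\ref{th-1a} to each of the six weight functions $w_i$ with $i\in\{1,3,4,5,6,7\}$. Theorem~\ref{th-1a} requires three ingredients: log-supermodularity of $(\lambda,x)\mapsto w(\lambda,x)$, together with Assumptions~\ref{assump-1a}--\ref{assump-1c}. The first of these is already delivered for all seven functions by Theorem~\ref{th-logsuper}, so for each admissible $i$ it remains only to check that the three assumptions hold; happily, most of the needed verifications have already been assembled in the running discussion that precedes Theorem~\ref{th-1a}.

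I would organize the verification assumption by assumption. Assumption~\ref{assump-1a} (monotonicity of $\lambda\mapsto w_i(\lambda,x)$ for each fixed $x$) is read off directly from the monotonicity list following Example~\ref{example1.1}: the functions $w_1,w_4,w_6$ are increasing in $\lambda$, the functions $w_3,w_7$ are decreasing in $\lambda$, and $w_5$ is increasing, decreasing, or constant according as $x>1$, $x<1$, or $x=1$ --- in every case monotonic, which is all that Assumption~\ref{assump-1a} demands. Assumption~\ref{assump-1c} (a common positive value with matching one-sided continuity) is supplied by the $x_0$-discussion preceding the theorem: $x_0=0$ with $w_i(\lambda,0)=1$ and right-continuity for $i\in\{1,4,6\}$; $x_0=\infty$ with $w_i(\lambda,\infty)=1$ and left-continuity for $i\in\{3,7\}$; and $x_0=1$ with $w_5(\lambda,1)=1$ and two-sided continuity for $i=5$.

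The one genuinely delicate point is Assumption~\ref{assump-1b}, namely that the coincidence set $\{x:\,w_i(\theta,x)=w_i(\lambda,x)\}$ be $F$-null for $\theta\ne\lambda$. For $i\in\{1,3,4,6,7\}$ this set is empty on $(0,\infty)$ --- the strict monotonicity in $\lambda$ of the corresponding function leaves no point of agreement (for $w_7$ invoking Lemma~\ref{lma-64}) --- so it is trivially $F$-null, the coincidence at the boundary point $x=0$ or $x=\infty$ being irrelevant since $X\in\mathcal{X}_+$. The remaining case $i=5$ is the exception: its coincidence set is the single point $\{1\}$, which is $F$-null precisely when $\mathbf{P}[X=1]=0$, i.e., when $F$ is continuous at $1$; under this mild condition on the loss variable the hypothesis of Theorem~\ref{th-1a} is met and $\lambda\mapsto H[\lambda,X]$ is increasing for $w_5$ as well. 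This same assumption is exactly what $w_2$ fails --- for $\theta<\lambda$ one has $w_2(\theta,x)=w_2(\lambda,x)$ on the whole set $(0,\infty)\setminus(\theta,\lambda]$, which carries positive $F$-mass in general --- and it is for this reason that $i=2$ is excluded from the corollary.

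I expect the main obstacle to be bookkeeping rather than analysis: the three assumptions must be matched against six functions without conflating the differing roles of $x_0$ (finite versus $+\infty$) or the direction of $\lambda$-monotonicity. The only substantive care is needed for $w_5$, where Assumption~\ref{assump-1a} must be used in its full generality (monotonicity, not increase) and where the atom-at-$1$ issue in Assumption~\ref{assump-1b} has to be flagged; once $F$ is taken to be continuous at $1$, all six invocations of Theorem~\ref{th-1a} go through and yield the asserted strict monotonicity.
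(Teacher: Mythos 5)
Your proposal is correct and follows exactly the route the paper intends: the corollary is a direct application of Theorem~\ref{th-1a}, with log-supermodularity supplied by Theorem~\ref{th-logsuper} and Assumptions~\ref{assump-1a}--\ref{assump-1c} verified case by case from the discussions surrounding Example~\ref{example1.1}. Your flagging of the atom-at-$1$ condition $\mathbf{P}[X=1]=0$ for $w_5$ matches the caveat the paper itself records in the bullet list preceding Assumption~\ref{assump-1c}, so nothing is missing.
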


The above corollary excludes the weight
function $w_2(\lambda, x)=\mathbf{1}\{x> \lambda \}$ because
it does not satisfy Assumption \ref{assump-1b}.
Nevertheless, we have the following result.

\begin{proposition}
\label{prop-1}
Let $w_2(\lambda, x)=\mathbf{1}\{x> \lambda \}$. If the
cdf $F$ is increasing on $(0,\infty)$, then
the function $\lambda \mapsto H[\lambda,X]$ is increasing.
\end{proposition}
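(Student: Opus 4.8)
The plan is to prove the claim directly, bypassing Assumptions \ref{assump-1a}--\ref{assump-1c} (which $w_2$ violates, precisely through Assumption \ref{assump-1b}), and instead exploiting the explicit form of the weighted cdf's attached to $w_2$. Throughout I work with $\theta,\lambda\in\Lambda[w_2,X]$. Note first that, since $X>0$, the set $\Lambda[w_2,X]$ is nonempty only when $\mathbf{E}[X]<\infty$ (because $\mathbf{E}[X\mathbf{1}\{X>\lambda\}]=\infty$ whenever $\mathbf{E}[X]=\infty$); in that case the hypothesis that $F$ is increasing on $(0,\infty)$ forces $\mathbf{P}[X>\lambda]=1-F(\lambda)>0$ for every $\lambda>0$, so that $\Lambda[w_2,X]=(0,\infty)$ and $H[\lambda,X]=\mathbf{E}[X|X>\lambda]$ is well-defined and finite for each $\lambda$.

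First I would record the two weighted cdf's explicitly. For $w_2(\lambda,\cdot)=\mathbf{1}\{x>\lambda\}$ one has $F_{w_2(\lambda,\cdot)}(x)=0$ for $x\le\lambda$ and $F_{w_2(\lambda,\cdot)}(x)=(F(x)-F(\lambda))/(1-F(\lambda))$ for $x>\lambda$, and likewise for $\theta$. By Theorem \ref{th-logsuper} the function $w_2$ is log-supermodular, so the decomposition (\ref{condition-1}) holds with $h(x)=\mathbf{1}\{x>\lambda\}$; bound (\ref{w-cdf-2}) then yields the pointwise ordering $F_{w_2(\lambda,\cdot)}(x)\le F_{w_2(\theta,\cdot)}(x)$ for all $x\ge 0$ whenever $\theta\le\lambda$.

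The core of the argument is the integral identity (\ref{eqn-EH}): for $\theta<\lambda$,
\[
H[\lambda,X]-H[\theta,X]
=\mathbf{E}[X_{w_2(\lambda,\cdot)}]-\mathbf{E}[X_{w_2(\theta,\cdot)}]
=\int_{[0,\infty)}\big(F_{w_2(\theta,\cdot)}(x)-F_{w_2(\lambda,\cdot)}(x)\big)\,dx .
\]
The integrand is non-negative everywhere by the ordering above, so it remains only to locate a set of positive Lebesgue measure on which it is strictly positive. This is exactly where the hypothesis enters: on the interval $(\theta,\lambda]$ one has $F_{w_2(\lambda,\cdot)}(x)=0$, while $F_{w_2(\theta,\cdot)}(x)=(F(x)-F(\theta))/(1-F(\theta))>0$, because $F$ being increasing on $(0,\infty)$ gives $F(x)>F(\theta)$ for every $x\in(\theta,\lambda]$. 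Since $(\theta,\lambda]$ has Lebesgue measure $\lambda-\theta>0$, the integral is strictly positive, so $H[\theta,X]<H[\lambda,X]$, as required.

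I do not expect a genuine obstacle here; the only points needing care are (i) confirming that the effective domain $\Lambda[w_2,X]$ is the full half-line, so the comparison is non-vacuous, and (ii) making precise why strict monotonicity survives even though Assumption \ref{assump-1b} fails. The resolution of (ii) is conceptual rather than technical: Assumption \ref{assump-1b} concerns the set where $w_2(\theta,\cdot)$ and $w_2(\lambda,\cdot)$ \emph{agree}, whereas what actually drives strict monotonicity is the set $(\theta,\lambda]$ where they \emph{differ}; requiring $F$ to be increasing guarantees precisely that $F$ charges this latter set with positive probability, $\mathbf{P}[\theta<X\le\lambda]=F(\lambda)-F(\theta)>0$, which is the quantitative input replacing Assumption \ref{assump-1b}. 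One could equivalently phrase the whole argument as a proof by contradiction mirroring that of Theorem \ref{th-1a}: assuming $H[\theta,X]=H[\lambda,X]$ would force the two weighted cdf's to coincide Lebesgue-almost everywhere, which is already contradicted on $(\theta,\lambda]$.
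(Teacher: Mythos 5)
Your proof is correct. It rests on the same key observation as the paper's --- that on the interval $(\theta,\lambda]$ the weighted cdf $F_{w_2(\lambda,\cdot)}$ vanishes while $F_{w_2(\theta,\cdot)}(x)=(F(x)-F(\theta))/(1-F(\theta))$ is positive precisely because $F$ is increasing --- but the logical packaging differs. The paper argues by contradiction: assuming $H[\theta,X]=H[\lambda,X]$, it combines the ordering $F_{w_2(\lambda,\cdot)}\le F_{w_2(\theta,\cdot)}$ with equality of means and right-continuity of cdf's to force the two weighted cdf's to coincide everywhere, arriving at equation (\ref{distr-eq-rewritten}) and then at the contradiction $F(\lambda)=F(\theta)$. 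You instead go directly through identity (\ref{eqn-EH}), writing $H[\lambda,X]-H[\theta,X]=\int_{[0,\infty)}\big(F_{w_2(\theta,\cdot)}(x)-F_{w_2(\lambda,\cdot)}(x)\big)\,dx$ and exhibiting a set of positive Lebesgue measure where the non-negative integrand is strictly positive. Your route is a bit more self-contained (it does not need the ``equal a.e.\ plus right-continuity implies equal everywhere'' step borrowed from the proof of Theorem \ref{th-1a}) and it yields a quantitative by-product, namely the explicit lower bound $H[\lambda,X]-H[\theta,X]\ge\int_{\theta}^{\lambda}\frac{F(x)-F(\theta)}{1-F(\theta)}\,dx>0$; the paper's version is shorter on the page only because it reuses displayed equation (\ref{distr-eq}) already derived earlier. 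Your preliminary remarks on $\Lambda[w_2,X]$ and your diagnosis of why the failure of Assumption \ref{assump-1b} is harmless here are both accurate and are points the paper leaves implicit.
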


\begin{proof}
Equation (\ref{distr-eq}) with the weight function
$w_2(\lambda, x)=\mathbf{1}\{x> \lambda \}$ becomes
\begin{align}
\label{distr-eq-rewritten}
\frac{\mathbf{P}[\lambda < X \le x]}{1-F(\lambda)}
= \frac{\mathbf{P}[\theta < X \le x]}{1-F(\theta)},
\end{align}
where $\theta < \lambda$. Since $F$ is increasing on $(0,\infty)$,
the denominators on both sides of equation (\ref{distr-eq-rewritten})
are non-zero, that is, positive. But the numerator of the left-hand side
is equal to $0$ for all $x\in (\theta, \lambda ]$. Hence, the right-hand
side should also be equal to $0$, which means that
$F(x)- F(\theta)=0$ for all $x\in (\theta, \lambda ]$, and
thus $F(\lambda )= F(\theta)$ in particular.
Since $\theta < \lambda$, the latter equation contradicts
the assumption that  $F$ is increasing, thus concluding the proof
of Proposition \ref{prop-1}.
\end{proof}

\section{Continuity-type results}
\label{section-5}

The motivating actuarial problem at the end of Section \ref{section-1} demonstrates
the importance of establishing continuity-type results for the function
$\lambda \mapsto H[\lambda,X]$. These are naturally connected with
continuity-type properties of the function
$\lambda \mapsto w(\lambda, x)$. In this context, we first look
at the seven functions $w_i(\lambda, x)$ of Example \ref{example1.1}:
\begin{itemize}
  \item
The functions $\lambda \mapsto w_i(\lambda, x)$,
$i\in \{1,\dots , 7\} \setminus \{2\}$,
are continuous for every $x> 0$.
  \item
The function $\lambda \mapsto w_2(\lambda, x)$ is
right-continuous for every $x> 0$.
\end{itemize}

\begin{assumption}\rm \label{assump-2}
Let $\lambda_0 >0 $ be such that
$\displaystyle\lim_{\lambda \downarrow \lambda_0} w(\lambda, x)=
w(\lambda_0, x) $ for $F$-almost all $x > 0$.
\end{assumption}

All the seven functions of Example \ref{example1.1} satisfy
Assumption \ref{assump-2} for every $\lambda_0 >0$. To connect
this assumption with the right-continuity of the function
$\lambda \mapsto H[\lambda,X]$, we need to interchange
limit and integration operations. This is taken care by the following
assumption.

\begin{assumption}\rm \label{assump-4}
Let there exist a random variable $Y\ge 0$ such that
$\mathbf{E}[Y]<\infty $ and $X w_{\lambda }(X)\le Y$ for all
$\lambda $ in a (small) neighbourhood of $\lambda_0>0$.
\end{assumption}

The seven functions of Example \ref{example1.1} satisfy Assumption \ref{assump-4}
under the following conditions:
\begin{itemize}
  \item
For $w_i(\lambda, x)$, $i\in\{1,4,5,6\}$, the moment
$\mathbf{E}[X w_i(\lambda_0+\epsilon, X)]$ is finite for some
$\epsilon >0$.

\item
For $w_i(\lambda, x)$, $i \in \{2,3,7\}$, the moment $\mathbf{E}[X]$ is finite.
\end{itemize}

The next theorem now becomes obvious and its proof is omitted.

\begin{theorem} \label{th-2a}
If Assumptions \ref{assump-2} and \ref{assump-4} are satisfied,
then the function $\lambda \mapsto H[\lambda,X]$ is
right-continuous at $\lambda_0$.
\end{theorem}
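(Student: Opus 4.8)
The plan is to exploit the ratio structure
$H[\lambda,X]=\mathbf{E}[Xw(\lambda,X)]/\mathbf{E}[w(\lambda,X)]$ and to pass to the limit
$\lambda\downarrow\lambda_0$ separately in the numerator $N(\lambda)=\mathbf{E}[Xw(\lambda,X)]$
and the denominator $D(\lambda)=\mathbf{E}[w(\lambda,X)]$ by the dominated convergence theorem.
Once both limits are secured, right-continuity of $\lambda\mapsto H[\lambda,X]$ at $\lambda_0$
follows from the continuity of division, provided $D(\lambda_0)=\mathbf{E}[w(\lambda_0,X)]>0$.
The latter holds because $\lambda_0\in\Lambda[w,X]$ is exactly the condition that makes
$H[\lambda_0,X]$ well-defined and finite in the first place, so I would record this positivity at
the outset.

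For the numerator the argument is immediate. Assumption~\ref{assump-2} supplies the pointwise
convergence $Xw(\lambda,X)\to Xw(\lambda_0,X)$ for $F$-almost all $x$ as $\lambda\downarrow\lambda_0$,
while Assumption~\ref{assump-4} furnishes the integrable envelope $Xw(\lambda,X)\le Y$, with
$\mathbf{E}[Y]<\infty$, valid throughout a right-neighbourhood of $\lambda_0$. The dominated
convergence theorem then yields $N(\lambda)\to N(\lambda_0)$ with no further work.

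The denominator needs a touch more care, and this is the only genuine obstacle. Pointwise
convergence $w(\lambda,X)\to w(\lambda_0,X)$ is again given by Assumption~\ref{assump-2}, but
Assumption~\ref{assump-4} dominates $Xw(\lambda,X)$ rather than $w(\lambda,X)$ itself, and since
$X$ may be small the naive bound $Y/X$ need not be integrable. I would circumvent this by splitting
the expectation over the events $\{X\ge 1\}$ and $\{X<1\}$. On $\{X\ge 1\}$ one has
$w(\lambda,X)\le Xw(\lambda,X)\le Y$, so this part is dominated outright. On $\{X<1\}$ the
monotonicity of $x\mapsto w(\lambda,x)$ from Assumption~\ref{assumption-onw} gives
$w(\lambda,X)\le w(\lambda,1)$, and it remains only to note that $w(\lambda,1)$ stays bounded as
$\lambda\downarrow\lambda_0$; this follows from Assumption~\ref{assump-2} together with monotonicity
in $x$ (comparing with a point just above $1$), so that $w(\lambda,1)\le C<\infty$ for all $\lambda$
in a small right-neighbourhood of $\lambda_0$. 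The envelope $Y+C$ is then integrable and dominates
$w(\lambda,X)$ uniformly in $\lambda$, whence $D(\lambda)\to D(\lambda_0)$ by dominated convergence.

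Combining the two limits gives
$H[\lambda,X]=N(\lambda)/D(\lambda)\to N(\lambda_0)/D(\lambda_0)=H[\lambda_0,X]$
as $\lambda\downarrow\lambda_0$, which is the asserted right-continuity. I note that the identical
two-step argument, run over a full two-sided neighbourhood and with Assumption~\ref{assump-2}
strengthened to an ordinary (two-sided) limit, would likewise deliver full continuity of
$\lambda\mapsto H[\lambda,X]$ at $\lambda_0$.
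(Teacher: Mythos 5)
Your argument is correct in substance, but note that the paper offers no proof at all here --- it declares Theorem \ref{th-2a} ``obvious'' and omits the argument --- so the comparison is really between your write-up and the implicit intended proof, which is exactly your dominated-convergence treatment of the ratio $N(\lambda)/D(\lambda)$. The one place where you add genuine content beyond ``obvious'' is the denominator, and that is also the one place where your write-up needs a small repair. You bound $w(\lambda,X)$ on $\{X<1\}$ by $w(\lambda,1)$ and justify the boundedness of $w(\lambda,1)$ near $\lambda_0$ by ``comparing with a point just above $1$''; but Assumption \ref{assump-2} only guarantees convergence of $\lambda\mapsto w(\lambda,x)$ on a set of full $F$-measure, and if $X$ is supported in, say, $(0,1/2]$ that set need not contain any point at or above $1$, so no such comparison point is available. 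The fix is immediate: since $X>0$ almost surely, the convergence set has positive $F$-measure in $(0,\infty)$ and hence contains some $x_1>0$; split over $\{X\le x_1\}$ and $\{X>x_1\}$, using $w(\lambda,X)\le w(\lambda,x_1)\le C$ on the first event (monotonicity in $x$ plus convergence at $x_1$) and $w(\lambda,X)\le Xw(\lambda,X)/x_1\le Y/x_1$ on the second. With that adjustment, together with your (correct) observation that $D(\lambda_0)=\mathbf{E}[w(\lambda_0,X)]>0$ because $\lambda_0\in\Lambda[w,X]$, the proof is complete, and your closing remark that the same argument run two-sidedly yields Theorem \ref{th-2b} matches the paper's (equally unproved) continuity statement.
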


We next investigate the continuity of the function $\lambda \mapsto H[\lambda,X]$.

\begin{assumption}\rm \label{assump-3}
Let $\lambda_0>0$ be such that $
\displaystyle\lim_{\lambda \to \lambda_0} w(\lambda, x)=
w(\lambda_0, x)$ for $F$-almost all $x > 0$.
\end{assumption}

In the case of the seven functions $w_i(\lambda, x)$
of Example \ref{example1.1}, we have the following notes:
\begin{itemize}
  \item
The functions $\lambda \mapsto w_i(\lambda, x)$,
$i\in \{1,\dots , 7\} \setminus \{2\}$,
are continuous for
every $x> 0$ and so Assumption \ref{assump-3} is satisfied for
all $\lambda_0>0$.
  \item
The function $\lambda \mapsto w_2(\lambda, x)$ is
right-continuous. Assumption \ref{assump-3}
fails because the function $\lambda \mapsto w_2(\lambda, x)$
has a jump (of size $1$) at the point $x$.
However, if $F$ is continuous on $(0,\infty )$, then
every singleton $\{ x \}$ has $F$-measure zero and thus
Assumption \ref{assump-3} is satisfied for every $\lambda_0>0$.
 \end{itemize}

The proof of the next theorem is elementary and thus omitted.

\begin{theorem} \label{th-2b}
If Assumptions \ref{assump-4} and \ref{assump-3} are satisfied, then
the function $\lambda \mapsto H[\lambda,X]$ is continuous at
$\lambda_0$.
\end{theorem}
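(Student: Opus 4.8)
The plan is to prove continuity at $\lambda_0$ by the Dominated Convergence Theorem, essentially repeating the (omitted) argument behind Theorem~\ref{th-2a} but feeding it the two-sided pointwise convergence of Assumption~\ref{assump-3} in place of the one-sided convergence used there. Fix $\lambda_0>0$ and let $\lambda_n\to\lambda_0$ be arbitrary. Since
\[
H[\lambda,X]=\frac{N(\lambda)}{D(\lambda)},\qquad
N(\lambda)=\mathbf{E}[Xw(\lambda,X)],\quad D(\lambda)=\mathbf{E}[w(\lambda,X)],
\]
and $D(\lambda_0)>0$ because $\lambda_0\in\Lambda[w,X]$, it is enough to show that $N(\lambda_n)\to N(\lambda_0)$ and $D(\lambda_n)\to D(\lambda_0)$; continuity of $\lambda\mapsto H[\lambda,X]$ at $\lambda_0$ then follows from the arithmetic of limits.

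For the numerator I would argue precisely as in the right-continuity case. Assumption~\ref{assump-3} gives $Xw(\lambda_n,X)\to Xw(\lambda_0,X)$ for $F$-almost every $x>0$, while Assumption~\ref{assump-4} dominates $Xw(\lambda_n,X)$ by the fixed integrable variable $Y$ once $n$ is large enough that $\lambda_n$ lies in the postulated neighbourhood of $\lambda_0$. The Dominated Convergence Theorem then delivers $N(\lambda_n)\to N(\lambda_0)$. The sole change from Theorem~\ref{th-2a} is that the convergence is now two-sided, so I need not restrict to $\lambda_n\downarrow\lambda_0$.

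The step I expect to be the main obstacle is the denominator, because Assumption~\ref{assump-4} controls the product $Xw(\lambda,X)$ but not the bare weight $w(\lambda,X)$, which can remain large where $X$ is small. To build an integrable envelope I would use that $X>0$ almost surely and that $x\mapsto w(\lambda,x)$ is non-decreasing by Assumption~\ref{assumption-onw}. Splitting on $\{X\ge 1\}$ and $\{0<X<1\}$: on the former, $w(\lambda_n,X)\le Xw(\lambda_n,X)\le Y$; on the latter, $w(\lambda_n,X)\le w(\lambda_n,1)$, and the numbers $w(\lambda_n,1)$ are bounded by some $M<\infty$ since, $x=1$ being (or being replaceable by a nearby) point of convergence in Assumption~\ref{assump-3}, the sequence $w(\lambda_n,1)$ converges and hence is bounded. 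Thus $w(\lambda_n,X)\le Y+M\,\mathbf{1}\{0<X<1\}$, an integrable dominating variable, and a second application of Dominated Convergence yields $D(\lambda_n)\to D(\lambda_0)$.

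Combining the two limits with $D(\lambda_0)>0$ gives $H[\lambda_n,X]\to H[\lambda_0,X]$, and since the sequence $\lambda_n\to\lambda_0$ was arbitrary the function $\lambda\mapsto H[\lambda,X]$ is continuous at $\lambda_0$. For the seven weight functions of Example~\ref{example1.1} the small-$X$ envelope is transparent, as each $\lambda\mapsto w_i(\lambda,1)$ is continuous and therefore locally bounded, so the genuinely new input relative to the right-continuity result is only the two-sided convergence supplied by Assumption~\ref{assump-3}.
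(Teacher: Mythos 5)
Your proof is correct and is, in substance, the dominated-convergence argument that the paper omits as ``elementary'': pointwise $F$-a.e.\ convergence from Assumption \ref{assump-3} plus the envelope $Y$ from Assumption \ref{assump-4} give $\mathbf{E}[Xw(\lambda_n,X)]\to\mathbf{E}[Xw(\lambda_0,X)]$, and you rightly notice that the denominator needs its own envelope since Assumption \ref{assump-4} only controls the product $Xw(\lambda,X)$. The one place to tighten is your choice of splitting point: Assumption \ref{assump-3} only guarantees convergence for $F$-almost all $x$, and ``a nearby point'' of $x=1$ need not lie in the convergence set (e.g.\ if $F$ puts no mass on $[1,\infty)$ the convergence could fail everywhere there). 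The clean fix is to pick \emph{any} $x^*>0$ in the set where the convergence of Assumption \ref{assump-3} holds --- this set is non-empty because it has $F$-measure one --- and split there: on $\{X\ge x^*\}$ one has $w(\lambda_n,X)\le Xw(\lambda_n,X)/x^*\le Y/x^*$, while on $\{0<X<x^*\}$ monotonicity in $x$ gives $w(\lambda_n,X)\le w(\lambda_n,x^*)$, which converges and is therefore bounded. With that adjustment the second application of dominated convergence goes through and the rest of your argument (positivity of the limiting denominator, arithmetic of limits, arbitrariness of the sequence $\lambda_n\to\lambda_0$) is exactly right.
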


We next present an example showing that the continuity of $F$ in the case
of the weight function $w_2(\lambda, x)$ is crucial for the function
$\lambda \mapsto H[\lambda,X]$ to be continuous.
Namely, let $F$ be the empirical cdf $F_n$
based on a sample $x_1, \dots, x_n$. Denote
the corresponding order statistics by $x_{1:n}\le \cdots \le x_{n:n}$.
The function $ \lambda \mapsto H[\lambda,X]$ is not continuous.
Indeed, it is only right-continuous: for every $1\le i \le n$, the function
$ \lambda \mapsto H[\lambda,X]$ takes on the value
$(x_{i:n}+\cdots +x_{n:n})/(n-i+1)$ when
$\lambda \in [x_{(i-1):n}, x_{i:n})$, with the notation
$x_{0:n}=0$.

\section{Technicalities}
\label{section-6}

Here we have collected technical details that were left out
from previous sections.

\begin{lemma}
\label{aux-lem-1}
We have $(y+1) \log (y+1) > y > \log(y+1)$ for every $y > 0$.
\end{lemma}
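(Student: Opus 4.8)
The plan is to establish the two inequalities $(y+1)\log(y+1) > y$ and $y > \log(y+1)$ separately, since each is a standard elementary estimate that I can treat by reducing to the sign of an auxiliary single-variable function and examining its behavior at the endpoint $y=0$ together with its derivative.

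For the right-hand inequality $y > \log(y+1)$, I would set $\phi(y) = y - \log(y+1)$ and observe that $\phi(0) = 0$. Differentiating gives $\phi'(y) = 1 - 1/(y+1) = y/(y+1)$, which is strictly positive for all $y > 0$. Hence $\phi$ is strictly increasing on $[0,\infty)$, so $\phi(y) > \phi(0) = 0$ for every $y > 0$, which is exactly the claim. This is the routine half.

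For the left-hand inequality $(y+1)\log(y+1) > y$, I would proceed analogously by defining $\psi(y) = (y+1)\log(y+1) - y$, again with $\psi(0) = 0$. The derivative is $\psi'(y) = \log(y+1) + (y+1)\cdot\frac{1}{y+1} - 1 = \log(y+1)$, which is strictly positive for all $y > 0$ since $y+1 > 1$ there. Therefore $\psi$ is strictly increasing on $[0,\infty)$ and $\psi(y) > \psi(0) = 0$ for every $y > 0$, giving the stated bound.

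I do not anticipate a genuine obstacle here: both inequalities are classical consequences of the concavity of $\log$ (or, equivalently, the strict monotonicity of the two auxiliary functions), and the endpoint value at $y=0$ is exactly zero in each case, so no boundary subtlety arises. The only mild point worth being careful about is ensuring strictness, which follows because each derivative is \emph{strictly} positive on the open half-line $(0,\infty)$; since the inequalities are required only for $y>0$, the chain $(y+1)\log(y+1) > y > \log(y+1)$ then holds as claimed by combining the two monotonicity arguments.
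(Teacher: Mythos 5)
Your proof is correct. The paper itself omits the proof of this lemma as elementary, and your argument (checking that $\phi(y)=y-\log(y+1)$ and $\psi(y)=(y+1)\log(y+1)-y$ both vanish at $y=0$ and have strictly positive derivatives $y/(y+1)$ and $\log(y+1)$ on $(0,\infty)$) is exactly the standard verification one would supply.
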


We shall use Lemma \ref{aux-lem-1} a number of times in this section.
We omit its proof as it is elementary.

\begin{lemma}
\label{lma-61}
The right-hand side of equation (\ref{lm-61}) is positive
for all $\lambda > 0$ and $x > 0$.
\end{lemma}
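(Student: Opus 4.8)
The plan is to clear the transcendental clutter from equation (\ref{lm-61}) with the shorthands $u=(1+x)^\lambda$, $\ell=\log(1+x)$ and $E=e^{((1+x)^\lambda-1)/\lambda}$, noting that $u>1$, $\ell>0$ and $\lambda\ell=\log u$ for all $x,\lambda>0$. Then $\lambda^2(1+x)^\lambda\log(1+x)=\lambda^2 u\ell=\lambda u\log u$ and $\lambda(1+x)^\lambda\log(1+x)-(1+x)^\lambda+1=u\log u-u+1$, while $E-x=w_4(\lambda,x)$, so the right-hand side of (\ref{lm-61}) becomes
\begin{equation}
\lambda u\log u\,(E-x)+(u\log u-u+1)\bigl(1-(u-1)x\bigr). \label{plan-rewrite}
\end{equation}
First I would extract two consequences of Lemma \ref{aux-lem-1} applied with $y=u-1>0$: the factor $u\log u-u+1>0$, and $u-1>\log u=\lambda\ell$; the latter gives $(u-1)/\lambda>\ell=\log(1+x)$ and hence $E>1+x$, i.e. $E-x>1$.

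If $(u-1)x\le 1$, then the second summand of (\ref{plan-rewrite}) is non-negative and the first is strictly positive, so positivity is immediate. The whole content of the lemma thus sits in the regime $(u-1)x>1$, where the second summand is negative. Writing $q=u\log u-u+1>0$ and observing
\[
(\ref{plan-rewrite})=\bigl[\lambda u\log u\,(E-x)-q(u-1)x\bigr]+q,
\]
it suffices to establish the cleaner inequality
\begin{equation}
\lambda u\log u\,(E-x)\ge (u\log u-u+1)(u-1)x \label{plan-star}
\end{equation}
for all $x,\lambda>0$, since the leftover $+q$ then upgrades the conclusion to strict positivity. Dividing (\ref{plan-star}) by $u\log u$ and setting $t=(u-1)/\lambda$, so that $E=e^t$, $u-1=\lambda t$ and $x=(1+\lambda t)^{1/\lambda}-1$, reduces (\ref{plan-star}) to the single transcendental inequality
\[
e^t\ge x\bigl(1+(1-r)t\bigr),\qquad r=\frac{u-1}{u\log u}\in(0,1),
\]
relating $x$ and $t$ (note $1-r=q/(u\log u)$).

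The main obstacle is exactly this last inequality. The crude bound $x=(1+\lambda t)^{1/\lambda}-1\le e^t-1$ — which is only $E-x\ge 1$ restated — is too lossy for moderate $t$, so I expect to need sharper control of $x$ in terms of $t$ and $\lambda$. Concretely, I would split according to $\lambda\gtrless 1$, where $(1+\lambda t)^{1/\lambda}\lessgtr 1+t$ because $\lambda\mapsto\lambda^{-1}\log(1+\lambda t)$ is decreasing, and combine the quadratic Taylor estimate $e^t\ge 1+t+t^2/2$ with the bounds of Lemma \ref{aux-lem-1} to absorb the factor $1+(1-r)t$, treating large $t$ (where the exponential crushes the polynomial $x$) separately. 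Once this estimate is secured on each sub-range of $(t,\lambda)$, inequality (\ref{plan-star}), and with it the positivity of the right-hand side of (\ref{lm-61}), follows, completing the proof.
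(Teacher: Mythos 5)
Your algebraic preparation is sound: the rewrite of the right-hand side of \eqref{lm-61} as $\lambda u\log u\,(E-x)+(u\log u-u+1)\bigl(1-(u-1)x\bigr)$ is correct and is in fact the same expression the paper reaches (its bound \eqref{bound-61}, written in the variable $y=(u-1)/\lambda$), the disposal of the easy case $(u-1)x\le 1$ is fine, and the reduction of the remaining case to $e^t\ge x\bigl(1+(1-r)t\bigr)$ with $x=(1+\lambda t)^{1/\lambda}-1$ and $r=(u-1)/(u\log u)$ is a valid, slightly weaker variant of the paper's key inequality \eqref{rel-error}. But the proof stops exactly at the point where the real work begins: the final transcendental inequality is not proved, only a strategy is announced (``I would split according to $\lambda\gtrless 1$\dots'', ``I expect to need sharper control\dots'', ``Once this estimate is secured\dots''). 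The branch $\lambda\ge 1$ would indeed go through, since there $x\le t$ and $e^t>t+t^2$ suffices; the genuine difficulty is the branch $\lambda<1$, where $x$ can be as large as $e^t-1$ (as $\lambda\downarrow 0$) while simultaneously $1-r\sim \lambda t/2\downarrow 0$, so the two effects must be balanced against each other uniformly in $(\lambda,t)$. Your sketch offers no concrete estimate for this regime, and the crude bounds you name ($x\le e^t-1$, $e^t\ge 1+t+t^2/2$) do not combine to close it. So the lemma's content is left unproven.

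For comparison, the paper avoids the case split entirely: it keeps the slightly stronger target \eqref{rel-error}, namely $e^y>(\lambda y+1)^{1/\lambda}\bigl(1+y-\lambda y^2/((\lambda y+1)\log(\lambda y+1))\bigr)$, takes logarithms, and shows that the difference $g(y)$ of the two sides satisfies $g(y)\to 0$ as $y\downarrow 0$ and $g'(y)>0$ for all $y>0$, the sign of $g'$ being read off from an explicit factorization whose factors are positive by Lemma~\ref{aux-lem-1}. Adopting that monotonicity-in-$y$ argument (for each fixed $\lambda$) is the natural way to complete your proof; as written, the proposal has a gap precisely at its self-identified ``main obstacle.''
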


\begin{proof}
With the notation $y=((x+1)^\lambda-1)/\lambda $,
the right-hand side of equation (\ref{lm-61}) is positive if and only if
\begin{multline}
\lambda (\lambda y+1)  \log(\lambda y +1)\Big(e^y -(\lambda y + 1)^{1/\lambda} + 1 \Big)
\\
+\big( (\lambda y+1) \log (\lambda y+1) - \lambda y\big)\big((\lambda y+1)
- \lambda y (\lambda y + 1)^{1/\lambda} \big) > 0.
\label{bound-61}
\end{multline}
Note that since $\lambda >0$, we have $y>0$ if and only if $x>0$. Hence, we need
to show that bound (\ref{bound-61}) holds for all $\lambda > 0$ and $y>0$.
Since $(\lambda y+1) \log (\lambda y+1) > \lambda y $
(see Lemma~\ref{aux-lem-1}), we have that bound (\ref{bound-61})
follows from
\[
(\lambda y+1)  \log(\lambda y +1)\Big(e^y -(\lambda y + 1)^{1/\lambda} \Big)
-\big( (\lambda y+1) \log (\lambda y+1) - \lambda y\big)y (\lambda y + 1)^{1/\lambda} > 0,
\]
which can be rewritten as
\begin{equation}
\label{rel-error}
e^y > (\lambda y + 1)^{1/\lambda}
\bigg(1+y -\frac{\lambda y^2}{ (\lambda y+1) \log (\lambda y+1)} \bigg).
\end{equation}
This is an interesting bound on its own, and we shall dwell upon it once
the proof of Lemma \ref{lma-61} has been finished. To prove bound (\ref{rel-error}),
we apply the logarithmic function and see that the bound is equivalent
to $g(y)>0$, where
\[
g(y)=y - \frac{\log(\lambda y + 1)}{\lambda}-\log \bigg(1+y -\frac{\lambda y^2}{ (\lambda y+1)
\log (\lambda y+1)} \bigg) .
\]
The function $g(y)$ converges to $0$ when $y\downarrow 0$, and so to prove
its positivity, we show that it is strictly increasing.
For this, we first verify the equation
\[
(\lambda y+1)\log(\lambda y+1) g'(y)
= \frac{\big((\lambda y+1)\log(\lambda y+1)
- \lambda y\big)\big(\log(\lambda y+1) \lambda y^2
+ \lambda y -\log(\lambda y+1)\big)}{(\lambda y+1)
\log(\lambda y+1)+y\big((\lambda y+1) \log(\lambda y+1)-\lambda y\big)}.
\]
Using Lemma~\ref{aux-lem-1}, we check that
the numerator and denominator in the above ratio are positive.
This implies that $g'(y)>0$ and concludes the proof of Lemma \ref{lma-61}.
\end{proof}

The above proof of log-supermodularity of $w_4(\lambda, x)$
contains an interesting element, which is related to bound (\ref{rel-error}).
Namely, it is well-known from Calculus that, for every $y > 0$, the function
$\lambda \mapsto  (\lambda y + 1)^{1/\lambda}$
is decreasing and
$\lim_{\lambda \downarrow 0} (\lambda y + 1)^{1/\lambda} = e^y$.
With the establishment of bound (\ref{rel-error}),
we have obtained a lower bound for the relative error
of this approximation of $e^y$. Namely, we have that
\begin{equation}
\frac{e^y - (\lambda y + 1)^{1/\lambda}}{ (\lambda y + 1)^{1/\lambda}}
> y\bigg (1 - \frac{\lambda y}{(\lambda y+1)\log(\lambda y+1)} \bigg ) > 0
\label{interesting}
\end{equation}
for all $\lambda > 0$ and $y > 0$.

\begin{lemma}
\label{lma-62}
The right-hand side of equation (\ref{lm-62}) is positive for
all $\lambda > 0$ and $x > 0$.
\end{lemma}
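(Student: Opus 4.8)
The plan is first to collapse the two variables into one. The right-hand side of (\ref{lm-62}) depends on $x$ and $\lambda$ only through the sum $t:=x+\lambda$, and $t$ ranges over all of $(0,\infty)$ as $x,\lambda$ do. So it suffices to prove that
\[
P(t):=\big((1+t)\log(1+t)\big)^2-t^2\log(1+t)-t^2>0
\qquad\text{for all } t>0.
\]
A first attempt using Lemma \ref{aux-lem-1}, which gives $(1+t)\log(1+t)>t$ and hence $\big((1+t)\log(1+t)\big)^2>t^2$, falls short: it misses the extra factor $1+\log(1+t)$ multiplying $t^2$. What is needed is a genuinely sharper estimate, and the right device is a change of variable.

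Next I would substitute $u=\log(1+t)$, so that $1+t=e^u$, $t=e^u-1$, and $u$ runs over $(0,\infty)$. Writing $t^2\log(1+t)+t^2=t^2(1+u)$, the quantity $P(t)$ becomes $e^{2u}u^2-(e^u-1)^2(1+u)$. Dividing by $(e^u-1)^2>0$, positivity of $P$ is equivalent to
\[
\Big(\frac{u e^u}{e^u-1}\Big)^2>1+u
\qquad\text{for all } u>0.
\]

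The key step is to prove the cleaner, and for this purpose strictly stronger, inequality $\dfrac{u e^u}{e^u-1}>1+\dfrac{u}{2}$ for $u>0$; this suffices because $(1+u/2)^2=1+u+u^2/4>1+u$. Clearing the positive denominator $e^u-1$ and multiplying out, this inequality is equivalent to $\psi(u)>0$, where
\[
\psi(u)=\Big(\tfrac{u}{2}-1\Big)e^u+\tfrac{u}{2}+1 .
\]
Here the argument becomes purely elementary: $\psi(0)=0$ and $\psi'(0)=0$, while $\psi''(u)=\tfrac12\,u\,e^u>0$ for $u>0$. Hence $\psi'$ is strictly increasing from $\psi'(0)=0$, so $\psi'>0$ on $(0,\infty)$; then $\psi$ is strictly increasing from $\psi(0)=0$, so $\psi>0$ on $(0,\infty)$, as required.

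I expect the only real obstacle to be locating the intermediate bound $1+u/2$: it is strong enough to dominate $\sqrt{1+u}$, yet weak enough that the resulting function $\psi$ has an unconditionally positive second derivative. This is what avoids having to differentiate $P$ four times; indeed $P$ vanishes to fourth order at $t=0$ (its leading behaviour is of order $t^4$), so a direct derivative attack on $P$ itself would be considerably more delicate.
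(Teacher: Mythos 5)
Your proof is correct, and it takes a genuinely different route from the paper's. You both begin by collapsing the two variables into $t=x+\lambda$ (the paper calls it $y$), but from there the paper proceeds by a two-stage monotonicity argument: it shows $f(0)=0$ and that $f$ is increasing by computing $(1+y)f'(y)=g(y)$, then repeats the same scheme for $g$, at which point the inequality $(y+1)\log(y+1)>y$ of Lemma \ref{aux-lem-1} finishes the job. You instead substitute $u=\log(1+t)$, divide by $(e^u-1)^2$, and reduce the whole statement to the single clean inequality $\tfrac{u e^u}{e^u-1}>1+\tfrac{u}{2}$, which you verify by the standard $\psi(0)=\psi'(0)=0$, $\psi''>0$ argument; I checked the algebra ($\psi'(u)=(\tfrac{u}{2}-\tfrac12)e^u+\tfrac12$, $\psi''(u)=\tfrac{u}{2}e^u$) and the squaring step (both sides positive), and everything is sound. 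Your route is arguably more illuminating: it avoids the repeated ad hoc multiplications by $(1+y)$, does not rely on Lemma \ref{aux-lem-1} at all, and isolates a sharp intermediate bound --- $\tfrac{u e^u}{e^u-1}=1+\tfrac{u}{2}+\tfrac{u^2}{12}-\cdots$ is the classical Bernoulli-type expansion, so $1+\tfrac{u}{2}$ is exactly the right threshold, strong enough to dominate $\sqrt{1+u}$ yet weak enough to yield an unconditionally positive $\psi''$. What the paper's approach buys in exchange is uniformity of method: it is the same differentiate-and-iterate template used for Lemma \ref{lma-61}, so the two technical lemmas read in parallel. Your closing observation that $P$ vanishes to fourth order at $t=0$ (leading term $\tfrac{5}{12}t^4$) is also correct and explains why a direct derivative attack on $P$ would be more laborious.
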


\begin{proof}
Substituting $y=x+\lambda$, we obtain that the
expression on the right-hand side of equation (\ref{lm-62}) is positive
if and only if $f(y)>0$, where
$$
f(y)=\big ((1+y)\log(1+y)\big )^2 - y^2\log(1+y) -y^2 .
$$
Since $f(y)=0$ when $y=0$, we have $f(y)>0$ for all $y> 0$
if the function $f$ is strictly increasing. For this, we check
the equation $(1+y)f'(y)=g(y)$, where
$$
g(y)= 2\big ((1+y)\log(1+y)\big )^2 -3y^2+2(1+y)\log(1+y) -2y.
$$
To show that $g(y)>0$ for all $y>0$, we note that
$g(0)=0$ and then show that $g$ is strictly increasing.
For this, we check the equation
$$
(1+y)g'(y) =
4\big ((1+y)\log(1+y)\big )^2
+4y(1+y) \log(1+y)-6y^2+6(1+y)\log(1+y)-6y.
$$
The right-hand side of the equation
is positive for all $y > 0$, which follows from the bound
$(y+1) \log (y+1) > y $ (see Lemma~\ref{aux-lem-1}).
This concludes the proof of Lemma \ref{lma-62}.
\end{proof}

\begin{lemma}
\label{lma-64}
The function $(\lambda, x)\mapsto w_7(\lambda, x)$
is increasing in $x$ and decreasing in $\lambda$.
\end{lemma}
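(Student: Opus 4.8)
The plan is to exploit the multiplicative structure of $w_7$. Introducing the single-variable function $\phi(t) = \log(1+t)/t$ on $(0,\infty)$, one sees at once that
\[
w_7(\lambda, x) = \frac{\log(1+x+\lambda)}{x+\lambda}\cdot\frac{x}{\log(1+x)} = \frac{\phi(x+\lambda)}{\phi(x)}.
\]
Once $w_7$ is written in this form, each of the two monotonicity assertions reduces to a single analytic property of $\phi$, and both of these properties are already available (in disguise) among the lemmas proved earlier in this section. So the whole proof becomes a matter of identifying the right reduction rather than manufacturing new inequalities.

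For the monotonicity in $\lambda$, note that $\phi(x)$ is a fixed positive quantity, so $\lambda\mapsto w_7(\lambda,x)$ is decreasing if and only if $\phi$ itself is decreasing. I would verify this by computing $\phi'(t)=\big(t/(1+t)-\log(1+t)\big)/t^2$ and observing that the numerator is negative because $\log(1+t)>t/(1+t)$, which is exactly the second inequality of Lemma~\ref{aux-lem-1}. This yields the strict decrease in $\lambda$, and in passing reconfirms the relation $w_7(\theta,x)>w_7(\lambda,x)$ asserted in Section~\ref{section-3}.

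For the monotonicity in $x$, write $\psi=\log\phi$, so that $\log w_7(\lambda,x)=\psi(x+\lambda)-\psi(x)$ and hence $(\partial/\partial x)\log w_7=\psi'(x+\lambda)-\psi'(x)$. For $\lambda>0$ this is positive precisely when $\psi'$ is strictly increasing, i.e. when $\psi=\log\phi$ is strictly convex. A direct differentiation gives
\[
\psi''(t)=\frac{1}{t^2}-\frac{1+\log(1+t)}{\big((1+t)\log(1+t)\big)^2},
\]
so that $\psi''(t)>0$ is equivalent to $\big((1+t)\log(1+t)\big)^2-t^2\log(1+t)-t^2>0$. But the left-hand side is exactly the function $f(y)$ (with $y=t$) whose positivity is established in Lemma~\ref{lma-62}; invoking that lemma gives $\psi''>0$ on $(0,\infty)$ and therefore the strict increase in $x$.

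The main obstacle is essentially bookkeeping rather than genuine difficulty: the only substantive step is the algebraic simplification of $\psi''(t)$ and the recognition that the resulting inequality coincides verbatim with the one already dispatched in Lemma~\ref{lma-62}. Once the factorization $w_7=\phi(x+\lambda)/\phi(x)$ is in hand, no new inequality must be proved—the decreasing-in-$\lambda$ claim collapses to Lemma~\ref{aux-lem-1}, and the increasing-in-$x$ claim collapses to the log-convexity of $\phi$, which is precisely the content of Lemma~\ref{lma-62}. I would therefore check the two derivative computations carefully (as these are the only places an error could creep in) and then simply cite the two earlier lemmas.
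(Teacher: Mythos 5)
Your proof is correct, and it takes a genuinely different route from the paper's. The paper proves the two monotonicity claims by brute-force differentiation of $w_7$ itself: for the $x$-monotonicity it computes $(\partial/\partial x)w_7$ explicitly, evaluates its limits as $\lambda\to 0$ and $x\to 0$, and then integrates the positivity of $(\partial^2/\partial\lambda\partial x)\log w_7$ (Lemma~\ref{lma-62}) from $\lambda=0$, where $(\partial/\partial x)\log w_7$ vanishes; for the $\lambda$-monotonicity it computes $(\partial/\partial\lambda)w_7$ and invokes Lemma~\ref{aux-lem-1}. Your factorization $w_7(\lambda,x)=\phi(x+\lambda)/\phi(x)$ with $\phi(t)=\log(1+t)/t$ makes both steps structural: the $\lambda$-claim is just the decrease of $\phi$, and the $x$-claim is just the strict convexity of $\psi=\log\phi$, since $(\partial/\partial x)\log w_7=\psi'(x+\lambda)-\psi'(x)$. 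The underlying inequalities are identical in both proofs --- your $\psi''(t)>0$ is, after clearing denominators, exactly the positivity of $f$ in Lemma~\ref{lma-62} (indeed the paper's mixed partial equals $\psi''(x+\lambda)$ in your notation), and $\phi'<0$ is the first inequality of Lemma~\ref{aux-lem-1} --- but your argument dispenses with the explicit derivative formulas and the limit computations, which is where the paper's bookkeeping lives. One small correction: the inequality $\log(1+t)>t/(1+t)$ that you need for $\phi'<0$ is equivalent to $(1+t)\log(1+t)>t$, which is the \emph{first} inequality of Lemma~\ref{aux-lem-1}, not the second ($y>\log(1+y)$); the content is available either way, but the citation should be fixed.
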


\begin{proof}
First, a simple calculation shows that
\begin{align*}
&
(1+x)(1+x+\lambda)(x+\lambda)^2\big (\log(1+x)\big )^2
\frac{\partial}{\partial x} w_7(\lambda, x)
\\
&\hspace{1cm}= x(1+x)(x+\lambda)\log(1+x)+\lambda (1+x)(1+x+\lambda)
\log(1+x)\log(1+x+\lambda) \\
&\hspace{1.5cm} -x(x+\lambda)(1+x+\lambda)\log(1+x+\lambda).
\end{align*}
From this equation we conclude that
\begin{align*}
\lim_{\lambda \rightarrow 0}\frac{\partial }{\partial x} w_7(\lambda, x) &=0,
\\
\lim_{x \rightarrow 0} \frac{\partial }{\partial x} w_7(\lambda, x)
&= \frac{2\lambda-\lambda \log(1+\lambda)-2\log(1+\lambda)
+\lambda^2\log(1+\lambda)}{2\lambda^2(1+\lambda)} > 0.
\end{align*}
We have shown in Lemma \ref{lma-62} that
$(\partial^2/ \partial \lambda \partial x)  \log(w_7(\lambda, x)) > 0$,
which can be rewritten as
$$
\frac{\partial}{\partial \lambda} \bigg(\frac{(\partial /\partial x)
w_7(\lambda, x)}{w_7(\lambda, x)}  \bigg) > 0,
$$
thus implying that the function
$\lambda \mapsto ((\partial /\partial x)w_7(\lambda, x))/w_7(\lambda, x)$
is strictly increasing. But the function takes on the value $0$ at $\lambda =0$,
and thus the function is positive for all
$x > 0$ and $ \lambda > 0$.
Since the denominator $w_7(\lambda, x)$ is strictly positive,
we have $(\partial /\partial x) w_7(\lambda, x) > 0$
for all $x > 0$ and $ \lambda > 0$.
Consequently, $w_7(\lambda, x)$ is increasing in $x$.

To show that $w_7(\lambda, x)$ is decreasing in $\lambda$, we check the equation
\begin{equation}
(1+x+\lambda)(x+\lambda)^2\log(1+x)
\frac{\partial}{\partial \lambda} w_7(\lambda, x)
= x(x+\lambda-(1+x+\lambda)\log(1+x+\lambda)) .
\label{final}
\end{equation}
Lemma~\ref{aux-lem-1} implies that the right-hand side of equation (\ref{final})
is negative. This concludes the proof of Lemma \ref{lma-64}.
\end{proof}

\section*{Acknowledgments}

The research was partially supported by
the Natural Sciences and Engineering Research Council (NSERC) of Canada.

\section*{References}
\def\hang{\hangindent=\parindent\noindent}

\hang
Broll, U., Egozcue, M., Wong, W.K.\, and Zitikis, R.(2010).
Prospect theory, indifference curves, and hedging risks.
Applied Mathematics Research Express  (to appear).

\hang  Denuit, M., Dhaene, J., Goovaerts, M.J., Kaas, R., 2005.
Actuarial Theory for Dependent Risk: Measures, Orders and Models.
Wiley, New York.

\hang Denneberg, D. (1994).
Non-Additive Measure and Integral.
Kluwer, Dordrecht.

\hang
Egozcue, M., Fuentes Garc\'ia, L., Wong, W.K.\, and Zitikis, R. (2010).
The covariance sign of transformed
random variables with applications to economics and finance.
IMA Journal of Management Mathematics (to appear).

\hang
Fujishige, S. (1991). Submodular Functions and Optimization.
North-Holland, Amsterdam.

\hang Furman, E.\, and Zitikis, R. (2008). Weighted premium
calculation principles. Insurance: Mathematics and
Economics, 42, 459--465.

\hang Furman, E.\, and Zitikis, R. (2009).
Weighted pricing functionals with applications to insurance: an overview.
North American Actuarial Journal, 13, 483--496.

\hang Jones, B.L.\, and Zitikis, R. (2007). Risk measures,
distortion parameters, and their empirical estimation.
Insurance: Mathematics and Economics, 41, 279--297.

\hang
Kamps, U. (1998).
On a class of premium principles including the Esscher principle.
Scandinavian Actuarial Journal, 1998, 75--80.

\hang Lehmann, E.L. (1966). Some concepts of dependence.
Annals of Mathematical Statistics, 37, 1137--1153.

\hang
Mitrinovi\'{c}, D.S., Pe\v{c}ari\'{c}, J.E. and Fink, A.M. (1993).
Classical and New Inequalities in Analysis.
Kluwer, Dordrecht.

\hang
Narayanan, H. (1997). Submodular Functions and Electrical Networks.
North-Holland, Amsterdam.

\hang
Pe\v{c}ari\'{c}, J.E., Proschan, F. and Tong, Y.L. (1992).
Convex Functions, Partial Orderings, and Statistical Applications.
Academic Press, Boston.

\hang
Topkis, D.M. (1978). Minimizing a submodular function on a lattice.
Operations Research, 26, 305--321.

\hang
Topkis, D.M. (2001). Supermodularity and Complementarity.
Princeton University Press, Princeton.

\end{document}